\def\preprint{0}
\ifnum\preprint=9
    \documentclass[11pt,draft,a4paper]{article} 
\fi 
\def\masterthesis{0}
\def\bigfont{0} 
\def\cryptology{0} 
\def\sigconf{0} 
\ifnum\sigconf=0
   \RequirePackage{etex}
\fi
\def\llncs{1} 
\def\lipics{0} 
\def\quantumjournal{0}

\def\draft{0}
\def\anonymous{0} 
\def\shownomenclature{1} 
\def\smallbib{0} 
\def\toc{0} 
\ifnum
    \ifnum\preprint=1 1\else\ifnum\cryptology=1 1\else 0\fi\fi
=1 
    \ifnum\draft=1
       \documentclass[11pt,draft,a4paper]{article}
       \pdfoutput=1
    \else
        \documentclass[11pt,final,a4paper]{article}
        \pdfoutput=1
    \fi
    \usepackage{lmodern}
    \ifnum\preprint=1
        \usepackage{fullpage}
    \fi
\fi
\ifnum\masterthesis=1
    \documentclass[11pt,final,a4paper,titlepage]{article}
    \usepackage{lmodern}
\fi
\ifnum\sigconf=1
        \documentclass[sigconf,final]{acmart}
\fi
\ifnum\bigfont=1
    \documentclass[final,17pt]{extarticle} \usepackage{fullpage}
\fi
\ifnum\llncs=1
    \ifnum\draft=0
        \documentclass[runningheads]{llncs}
    \else
        \documentclass[runningheads,draft]{llncs}
    \fi
\fi
\ifnum\lipics=1
    \input{lipics_header}
\fi
\ifnum\quantumjournal=1
  \documentclass[a4paper,onecolumn,11pt 
  ]
  {quantumarticle}
  \pdfoutput=1

\fi

\ifnum\toc=1\ifnum\lipics=0
    \usepackage[nottoc]{tocbibind} 
\fi\fi
\ifnum\lipics=0\ifnum\sigconf=0\usepackage{authblk}\fi\fi 
\ifnum\sigconf=0
    \usepackage{silence} 
    \usepackage{amssymb} 
\fi
\ifnum\lipics=0 
    \usepackage[modulo]{lineno}
\fi

\ifnum\llncs=1

\fi \usepackage{amsthm}

\usepackage{amsfonts,latexsym,color,paralist,url,ifdraft,mathrsfs,thm-restate,comment,bbold,booktabs,ifthen,wrapfig}

\usepackage[utf8]{inputenc} \usepackage[autostyle=false, style=english]{csquotes} \MakeOuterQuote{"}

\usepackage[T1]{fontenc}
\ifnum\sigconf=0
    \ifnum \shownomenclature=1
        \usepackage[refpage,prefix]{nomencl}
         
        \makenomenclature
        
  \makeatletter
        \def\thenomenclature{%
          \providecommand*{\listofnlsname}{\nomname}%

              \section{\nomname}
              \label{sec:nomenclature}    
              \if@intoc\addcontentsline{toc}{section}{\nomname}\fi%

          \nompreamble
          \if@nomentbl
            \let\itemOrig=\item
            \def\item{\gdef\item{\\}}%
            \expandafter\longtable\expandafter{\@nomtableformat}
          \else
            \list{}{%
              \labelwidth\nom@tempdim
              \leftmargin\labelwidth
              \advance\leftmargin\labelsep
              \itemsep\nomitemsep
              \let\makelabel\nomlabel}%
          \fi
        }
        \makeatother 
    \fi
    \ifnum\lipics=0    
        \usepackage[draft=false,pageanchor]{hyperref}
        \usepackage[final]{graphicx}
    \fi
    \usepackage{doi}
\fi

\ifdraft{\usepackage{showlabels}}{} 

\usepackage [ lambda,advantage , operators , sets , adversary , landau , probability , notions , logic ,ff, mm,primitives , events , complexity , asymptotics , keys]{cryptocode}

\ifdraft{\newcommand{\authnote}[3]{{\color{#3} {\bf  #1:} #2}}}{\newcommand{\authnote}[3]{}}

\newcommand{\tensor}{\otimes}

\ifdraft{\linenumbers}{}

\usepackage{algorithm,algorithmicx,algpseudocode}
\usepackage[normalem]{ulem}

\usepackage{tikz}
\usetikzlibrary{arrows,chains,matrix,positioning,scopes}
\makeatletter
\tikzset{join/.code=\tikzset{after node path={%
\ifx\tikzchainprevious\pgfutil@empty\else(\tikzchainprevious)%
edge[every join]#1(\tikzchaincurrent)\fi}}}
\makeatother
\tikzset{>=stealth',every on chain/.append style={join},
         every join/.style={->}}
\tikzstyle{labeled}=[execute at begin node=$\scriptstyle,
   execute at end node=$]
\usepackage[capitalise]{cleveref} 
\crefname{Game}{Game}{Games}

\ifthenelse{
\(\equal{\cryptology}{1}\OR
\(\equal{\preprint}{1} \OR \equal{\masterthesis}{1}\) \OR \(\equal{\bigfont}{1} \OR \equal{\quantumjournal}{1} \)
\) 
}
{
    \newtheorem{theorem}{Theorem} 
    \newtheorem{lemma}{Lemma}
    \newtheorem{corollary}{Corollary}
    \newtheorem{proposition}{Proposition}
    \newtheorem{definition}{Definition}

    \newtheorem{claim}{Claim}
    \newtheorem{fact}{Fact}
    \newtheorem*{theorem*}{Theorem}
    \newtheorem*{lemma*}{Lemma}
    \newtheorem*{corollary*}{Corollary}
    \newtheorem*{proposition*}{Proposition}
    \newtheorem*{claim*}{Claim}
    \theoremstyle{definition}
    \newtheorem{openproblem}{Open Problem}
    \theoremstyle{remark}
    \newtheorem{remark}[theorem]{Remark}
    \newtheorem{construction}{Construction}
    
    \theoremstyle{plain}
    
}{}  
\ifnum\sigconf=1

  \newtheorem*{theorem*}{Theorem}
  \newtheorem*{lemma*}{Lemma}
  \newtheorem*{corollary*}{Corollary}
  \newtheorem*{proposition*}{Proposition}
  \newtheorem*{claim*}{Claim}
  \theoremstyle{definition}

\fi
\ifnum\llncs=1
    \spnewtheorem{construction}{Construction}{\bfseries}{\itshape}
    \crefname{construction}{Construction}{Constructions}
    \Crefname{construction}{Construction}{Constructions}

\fi
  
\theoremstyle{plain}
\newcounter{Game} 
\expandafter\let\expandafter\savedflalignstar\csname flalign*\endcsname
\expandafter\let\expandafter\savedendflalignstar\csname endflalign*\endcsname
\AtBeginDocument{%
  \expandafter\let\csname flalign*\endcsname\savedflalignstar
  \expandafter\let\csname endflalign*\endcsname\savedendflalignstar
}

\newcommand{\ab}[1]{}  

\ifnum\masterthesis=1
    \usepackage{setspace}
    \usepackage{fancyhdr}
\fi

\usepackage{xspace}
\DeclareMathAlphabet{\mathpzc}{OT1}{pzc}{m}{it}

\newcommand{\oss}{\ensuremath{\mathsf{OSS}}\xspace}

\newcommand{\qsk}{\ensuremath{\mathpzc{sk}}\xspace} 
\newcommand{\quant}[1]{\ensuremath\mathpzc{{#1}}\xspace}

\newcommand{\spark}{\mathpzc{Spark}\xspace}
\newcommand{\clone}{\mathpzc{Clone}\xspace}
\newcommand{\recon}{\mathpzc{Reconstruct}\xspace}
\newcommand{\prepare}{\mathpzc{Prepare}\xspace}
\newcommand{\decon}{\mathpzc{Deconstruct}\xspace}
\newcommand{\ver}{\mathpzc{Verify}\xspace}
\newcommand{\conv}{\mathpzc{Converse}\xspace}

\newcommand{\tildever}{\widetilde{\mathpzc{Verify}}\xspace}

\newcommand{\oneshotgen}{\mathpzc{KeyGen}\xspace}
\newcommand{\oneshotsign}{\mathpzc{Sign\xspace}}
\newcommand{\oneshotversign}{\textsf{VerSign}\xspace}
\newcommand{\oneshotvertoken}{\mathpzc{VerToken}\xspace}

\newcommand{\oneshotcgen}{\textsf{CKeygen}\xspace}

\newcommand{\certgenent}{\mathpzc{GenEntropy}\xspace}
\newcommand{\certver}{\mathpzc{CertEntropy}\xspace}
\newcommand{\certclonewitness}{\mathpzc{CloneWitness}\xspace}
\newcommand{\certprover}{\mathpzc{Prover}\xspace}
\newcommand{\certcertifer}{\mathpzc{Certifier}\xspace}
\newcommand{\certcver}{\textsf{CCertEntropy}\xspace}
\newcommand{\ifprepare}{\textsf{CPrepare}\xspace}
\newcommand{\ifver}{\textsf{CVerify}\xspace}
\newcommand{\ifqver}{\mathpzc{QVerify}\xspace}
\newcommand{\ifrecon}{\textsf{CRecon}\xspace}

\newcommand{\bigstack}[1]{\vcenter{\hbox{$\begin{gathered}#1\end{gathered}$}}}

\usepackage{tikz}
\usetikzlibrary{quantikz2,arrows.meta,backgrounds,calc,fadings}
\usepackage[operators,sets]{cryptocode}
\usepackage{centernot}
\usepackage{multirow}
\usepackage{multicol}
\usepackage{graphicx} 
\usepackage{array}    
\usepackage{enumitem}
\usepackage{pifont}
\usepackage{colortbl}
\usepackage{float}
\newcommand{\cmark}{\textcolor{green!55!black}{\ding{51}}}
\newcommand{\xmark}{\textcolor{red!75!black}{\ding{55}}}
\DeclareMathOperator*{\Hinf}{H_{\infty}}
\DeclareMathOperator*{\Hbot}{H_{\centernot\bot,\infty}}
\begin{document}
\ifnum\masterthesis=0 
    \title{Conversable Quantum Fire in the Standard Model  
    }
    \date{}
\fi

\ifnum\anonymous=0
    \ifnum\preprint=1
        \author[1]{XXX}
        \author[1]{Or Sattath}
        \affil[1]{The Stein Faculty of Computer and Information Science, Ben-Gurion University of the Negev}
    \fi
    \ifnum\cryptology=1
        \author{XXX}
        \affil{The Stein Faculty of Computer and Information Science, Ben-Gurion University of the Negev, Beersheba, Israel\\
        XXX@post.bgu.ac.il}
        \author{Or Sattath}
        \affil{The Stein Faculty of Computer and Information Science, Ben-Gurion University of the Negev, Beersheba, Israel\\
        sattath@bgu.ac.il}
    \fi
    \ifnum\sigconf=1
        \author{Or Sattath}
            \affiliation{%
              \institution{The Stein Faculty of Computer and Information Science, Ben-Gurion University of the Negev}
              \city{Beersheba}
              \country{Israel}}
        \email{sattath@bgu.ac.il}

        \author{XXX}
            \affiliation{%
              \institution{The Stein Faculty of Computer and Information Science, Ben-Gurion University of the Negev}
              \city{Beersheba}
              \country{Israel}}
        \email{XXX@bgu.ac.il}
    \fi
    \ifnum\llncs=1
        %
        \author{Ofer Casper\inst{1}
        \and Barak Nehoran\inst{2}
        \and Or Sattath\inst{1}
        }
        \authorrunning{O. Casper et al.}
        %
        \institute{The Stein Faculty of Computer and Information Science, Ben-Gurion University of the Negev, Beersheba, Israel\\
        \email{caspero@post.bgu.ac.il}, \email{sattath@bgu.ac.il}
        \and
        Department of Computer Science, Columbia University, New York, NY, USA\\
        \email{b.nehoran@columbia.edu}
        }
    \fi
    \ifnum\quantumjournal=1
        \author[1]{Andrea Coladangelo}
        \affil[1]{Computing and Mathematical Sciences,	
        Caltech}
        \author[2]{Or Sattath}
        \email{sattath@bgu.ac.il}

        \affil[2]{The Stein Faculty of Computer and Information Science, Ben-Gurion University of the Negev}
        
    \fi
\else
    \ifnum\lipics=0
        \author{}
    \fi
\fi

\ifthenelse{\equal{\masterthesis}{0} \AND \equal{\sigconf}{0}}{
    \maketitle
} 
{}        
\ifnum\masterthesis=1
    \include{masterthesis_header}
\else 
    \ifnum\sigconf=0
        \begin{abstract}
    \fi
\fi
\ifnum\sigconf=0
    \ifnum\masterthesis=0

    Quantum fire consists of quantum states that can be efficiently functionally cloned but cannot be transmitted using one-way classical communication. Whereas all previous quantum-fire constructions with proven untelegraphability are relative to an oracle, ours, based on one-shot signatures, is in the standard model.

    Our construction achieves two further notions that we introduce: (a) keyless untelegraphability, a strengthening of untelegraphability; and (b) conversability, meaning that a flame, though not telegraphable via one-way classical communication, can be transmitted via classical interaction. Hence one-way and two-way classical communication differ qualitatively in their power to transmit this quantum fire.

    We use these two novel properties of quantum fire, along with its clonability, to give one of the first cryptographic applications of quantum fire. Keyless untelegraphability forces the distribution of serial numbers of valid flames to have high min-entropy, while conversability and clonability allow the corresponding flame to be cloned and transferred using only classical communication.

    We present two variants of our conversable quantum-fire construction. The first, based on one-shot signatures (instantiable from subexponential iO, subexponentially secure one-way functions, and LWE), supports polynomially many flames, and super-logarithmic min-entropy of the serial-number distribution. The second, under the stronger assumption of exponentially unforgeable one-shot signatures (instantiable relative to a classical oracle), supports  exponentially many flames, and linear min-entropy.
    


    \fi
\fi 
\ifnum\masterthesis=0
    \ifnum\sigconf=0
        \end{abstract}
    \fi
\fi

\ifnum\sigconf=1
    \begin{abstract}
        Abstract of SigConf goes here.
    \end{abstract}
    \keywords{}
    \maketitle
\fi

\ifnum\toc=1
    \ifnum\llncs=1
        \setcounter{secnumdepth}{3}
        \setcounter{tocdepth}{3}
    \fi
    \tableofcontents 
\fi

\section{Introduction}

Quantum fire is a quantum primitive consisting of states that are clonable (in a computational sense) yet untelegraphable. We can think of these as states that can spread (be cloned), analogous to "real" fire, provided they are kept "alive" quantumly and do not decohere: the fire cannot be "reconstructed" from the decohered state (the classical ashes), since that would allow it to be telegraphed. 

Quantum fire was introduced by Nehoran and Zhandry~\cite{NZ24} to demonstrate the distinction between two central no-go theorems in quantum information theory: the no-cloning theorem \cite{Park70,WZ82,Dieks82} and the no-telegraphing theorem \cite{Wer98} (originally known as the no-classical teleportation theorem, or no-teleportation without pre-shared entanglement). The no-cloning theorem states that one cannot generate two copies given a
single unknown quantum state, and has proved to be extremely useful in quantum cryptography. 
The no-telegraphing theorem, which is far less commonly used in quantum cryptography, informally shows that an arbitrary quantum state cannot be converted into a classical description from which the state can later be reconstructed. All previous quantum-fire constructions with proven untelegraphability are relative to a given oracle~\cite{NZ24,CGS25}. Bostanci, Nehoran, and Zhandry~\cite{BNZ24} additionally proposed a candidate construction for which untelegraphability remains unproved (for a full discussion, see \cref{table:constructionComparison} and \cref{sec:related_works}). In this work, we give the first provably untelegraphable quantum-fire construction in the standard model, assuming one-shot signatures~\cite{AGKZ20,SZ25}.

Early humans first encountered fire through lightning strikes and forest fires; exploiting it came long before domesticating it, that is, producing and controlling it at will. In this limited respect, existing quantum fire constructions remain undomesticated: their provable untelegraphability depends on an external oracle. The stakes are admittedly lower, but we domesticate quantum fire.

Our construction comes in two variants: a logarithmic construction (see \cref{const:fireFromOneShot}(a)) and a linear construction (see \cref{const:fireFromOneShot}(b)), which differ only in the underlying assumption and in the maximum depth of clone operations allowed.
We view executions as forming a tree of "flames", where the initial flame generated by $\spark$ is at depth $0$, and each application of the $\clone$ operation maps a flame at depth $d$ to two flames at depth $d+1$. 
The logarithmic construction is based on the standard assumption of the existence of an unforgeable one-shot signature scheme, and guarantees security as long as the depth of any flame in the resulting tree is at most $c \log(\secpar)$, for a constant $c$ fixed in advance.
The linear construction, on the other hand, relies on exponential unforgeability of the underlying one-shot signature scheme, and remains secure up to depth $\frac{1}{8}\secpar$. 
We note that in both constructions, the logarithmic and linear names refer to the number of clone operations and hence the depth of the resulting tree; that is, for the logarithmic construction, we can have at most a polynomial number of flames, while for the linear construction, we can have an exponential number of them. The tree structure of our construction is inspired by the construction of ``budget signatures'' in~\cite{AGKZ20}.%
\footnote{
    We thank the authors of~\cite{AGKZ20} for suggesting this idea.
}

We also emphasize that the two quantum flames that result from a $\clone$ operation might be far apart from an information-theoretic or computational perspective (e.g., in trace distance or computational distinguishability; indeed, they can be efficiently distinguished in our construction), but are clones in the sense that both pass verification with the original flame's serial number (see \cref{def:CloningCorrectnessAlt}).

Moreover, we show that our construction satisfies not only the security notion of untelegraphability but a stronger variant, which we term \emph{keyless untelegraphability}. The keyless untelegraphability property, as defined in \cref{def:StrongTelegraphingSecurity}, ensures that no flame can be transmitted via one-way classical communication, even when both the flame and its serial number are generated by an adversary.  
We note that the keyless notion has appeared in other quantum cryptographic schemes: quantum lightning is the keyless analog of public-key quantum money, and one-shot signatures are the keyless analog of tokenized digital signatures. See Ref.~\cite{Sat23} for further discussion of this distinction.
Put together, these yield our first main result:
\begin{theorem}[Main Result 1, formally stated in \cref{coro:OssToConversableFire}]\label{thm:fire_standard_model} 
One-shot signatures (for which standard-model constructions exist) imply keyless untelegraphable quantum fire.
\end{theorem}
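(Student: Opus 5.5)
The plan is to build quantum fire from a one-shot signature scheme $(\mathsf{Setup},\oneshotgen,\oneshotsign,\mathsf{Ver})$, organized as a binary tree of one-shot keys, in the spirit of the budget signatures of~\cite{AGKZ20}. $\spark$ runs $\oneshotgen$ to obtain a classical public key $\pk_0$ and a quantum secret key $\qsk_0$. The serial number is $\pk_0$, and the flame is $\qsk_0$ together with an (initially empty) classical certificate chain.

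$\clone$ on a flame at depth $d$ works as follows.
\begin{itemize}
\item Generate two fresh key pairs $(\pk_L,\qsk_L)$ and $(\pk_R,\qsk_R)$.
\item Use the current $\qsk$ to sign the message $(\pk_L,\pk_R)$; this consumes the key.
\item Output two flames. Each carries one new secret key and the chain extended by $(\pk_L,\pk_R,\sigma)$ together with the branch bit.
\end{itemize}
$\ver$ on input a serial number checks three things: that the chain starts at that serial number, that every signature in the chain verifies, and that the holder still possesses a live secret key for the last public key. The last check is done by signing a random challenge with a copy-free verification, or by using the OSS's key-validity test and rewinding it gently. Correctness of cloning is immediate from OSS correctness: both children pass verification under the original serial number. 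The depth bound $c\log\secpar$ is what lets us control the number of flames polynomially.

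Untelegraphability comes next, in the keyless form where the adversary itself outputs the serial number $\pk_0$ and a flame. The telegrapher produces a classical message $m$, and the receiver must then reconstruct a flame that verifies under $\pk_0$. We then argue as follows.
\begin{enumerate}
\item Suppose both the original flame (kept by the sender, or equivalently obtained by having the sender first clone and keep one copy) and the reconstructed flame verify.
\item Because $m$ is classical, we can run the receiver twice on $m$, obtaining two independent reconstructed flames, each verifying under $\pk_0$ with roughly the success probability $p$.
\item Each verifying flame determines a chain of signed public keys ending in a key for which a fresh signature was produced. Two verifying flames either end at the same leaf key, giving two distinct valid signatures (on different random challenges) under one public key, or their chains diverge at some node.
\item At the divergence node, the same public key has signed two different messages $(\pk_L,\pk_R)\neq(\pk_L',\pk_R')$.
\end{enumerate}
Either case breaks OSS unforgeability, which forbids producing two valid signatures on distinct messages for a single $\pk$ even when $\pk$ is adversarially generated. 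The reduction guesses nothing; it simply outputs the colliding pair.

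The main obstacle is step 2 together with the probability bookkeeping. The receiver's runs must each succeed, and their successes must be arranged so that the two end keys differ, or differ in signed content. In particular, two reconstructions could return the \emph{same} chain with the same leaf, and then we need the two challenge-signatures on fresh random messages. To handle this, I would make verification sign a uniformly random challenge, so two runs yield distinct messages except with probability $2^{-\secpar}$. I would also bound the success probability of two independent runs by $p^2$ via independence given $m$ and Jensen's inequality. The tree depth matters only for conversability and entropy, not here, so for the standard-model (polynomially secure) OSS the logarithmic construction suffices. The result follows since OSS are instantiable from subexponential iO, OWFs and LWE~\cite{SZ25}.
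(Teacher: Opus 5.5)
\textbf{Gap in steps 3--4.} The divergence case does not give a forgery. In your own construction, the two children of a node are authenticated by a single signed message. The parent key signs $(pk_L,pk_R)$ once, and both child flames carry the same link $(pk_L,pk_R,\sigma)$; they differ only in which of the two keys they follow. So two verifying flames can have chains that diverge at a node even though that node's key never signed two different messages: they carry the identical signed pair and just follow different children.

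Your step 4 is therefore false, and your dichotomy misses exactly this case. In fact, the argument proves too much. It would show that any two valid flames with the same serial number yield an OSS forgery. But $\clone$ honestly produces such a pair: siblings $pk_L\neq pk_R$ under a common signed link. So running the receiver twice on $m$ cannot give a contradiction. You also omit the case where one run's terminal key appears as a signing key inside the other run's chain. That case does give a forgery (live token plus the existing signature), but it has to be handled explicitly.

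\textbf{The missing idea is a counting argument, and this is exactly where the depth bound enters.} Your remark that depth matters ``only for conversability and entropy'' is wrong. What separates telegraphing from cloning is not that two valid flames exist. It is that from a classical $m$ one can produce arbitrarily many of them, while a flame can only ever spawn $2^d$ descendants.

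The paper's forger works as follows.
\begin{enumerate}
\item Run $\bdv(m)$ $\lceil 2^{c\log\secpar}\rceil+1=\lceil\secpar^{c}\rceil+1$ times.
\item Build the graph whose edges are the verified signed pairs.
\item If some key has signed two distinct pairs, output that forgery.
\item Otherwise every out-degree is at most $2$ and every vertex lies within distance $c\log\secpar$ of $s$. Via a BFS spanning tree, there are then at most $\secpar^c$ terminals.
\item By pigeonhole, either two runs hold live tokens for the same terminal key (sign two distinct messages), or some run holds a live token for a key that already signed a link. Both cases give a forgery.
\end{enumerate}

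This reduction runs in polynomial time only because the depth is logarithmic. For depth $\secpar/8$ it needs exponential unforgeability. Moreover, \cref{prop:bad_oss} shows that with depth $\secpar$ the construction can be telegraphable even from an unforgeable OSS, which your depth-independent argument would contradict.

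Finally, for imperfect adversaries you need more than $\secpar^c$ successful runs, not two. So the Jensen $p^2$ bound should be replaced by averaging over $(s,m)$, followed by polynomially many repetitions and a Chernoff bound.
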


Note that keyless untelegraphability is stronger than standard untelegraphability; moreover, as we show in \cref{clm:Min-entropyFire}, keyless untelegraphability implies that the distribution of serial numbers generated by $\spark$ has high min-entropy.

Previous works have considered whether interactive communication enables adversaries to transmit a quantum state over a classical channel.
Specifically, Çakan, Goyal, and Shmueli~\cite{CGS25} prove that their construction satisfies what they call interactive security: even interactive classical communication does not enable the transmission of a flame. In our terminology, their construction is therefore \emph{unconversable}.
We take the opposite perspective and ask whether interaction can be useful: can a quantum fire state be impossible to transmit over one-way classical communication, as required by untelegraphability, yet possible to transmit using interactive classical communication?
To capture this exact phenomenon, we introduce \emph{conversability}, the ability to transmit a quantum state via an interactive classical ``conversation''.
Informally, a quantum fire scheme is conversable if a sender holding a valid flame can enable a receiver to reconstruct a valid flame with the same serial number through an interactive protocol using only classical communication. Thus, conversable quantum fire consists of states that can be cloned and spread via interactive classical communication, but cannot be telegraphed via one-way classical communication.

Our quantum fire construction realizes this new notion:

\begin{theorem} [Main Result 2: Conversable Quantum Fire, informal]
    \label{thm:informalConversability}
    The quantum fire scheme of \cref{const:fireFromOneShot} is conversable: a valid
    flame can be sent using only interactive classical communication, despite
    the untelegraphability of the scheme.
\end{theorem}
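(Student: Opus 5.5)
The plan is to exhibit an explicit interactive protocol built from the $\clone$ tree structure of \cref{const:fireFromOneShot}, and then show that the receiver's output passes $\ver$ with the sender's serial number. I expect a flame at depth $d$ to consist of a one-shot-signature chain: the root public key $\mathsf{pk}_0$ serves as the serial number, and a certificate lists public keys $\mathsf{pk}_1,\dots,\mathsf{pk}_d$. Each $\mathsf{pk}_{i}$ is signed under the secret key of $\mathsf{pk}_{i-1}$, together with a branch bit recording which child it is. The flame also carries the quantum secret key $\qsk_d$ of the last public key. Under this reading, $\clone$ measures/signs with $\qsk_d$ on two fresh public keys generated by the sender, one per child. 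Verification then checks that the whole chain of classical signatures is valid and that the final quantum key is a valid one-shot-signature key for $\mathsf{pk}_d$ (e.g.\ by signing a test message and verifying).

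The conversation protocol $\conv$ goes as follows.
\begin{enumerate}
\item The receiver runs the one-shot-signature key generation locally, obtaining $(\mathsf{pk}', \qsk')$. It sends $\mathsf{pk}'$ classically to the sender.
\item The sender uses its key $\qsk_d$ to sign the message $\mathsf{pk}'$ (with the appropriate branch bit). This is exactly the signing step a $\clone$ would perform, restricted to one child.
\item The sender sends the signature together with its classical certificate chain.
\item The receiver outputs the flame consisting of the extended chain $(\mathsf{pk}_0,\dots,\mathsf{pk}_d,\mathsf{pk}')$ with its signatures, plus the quantum key $\qsk'$.
\end{enumerate}
If the sender wants to keep a flame too, it first applies $\clone$ and then converses one of the two resulting flames. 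This only increases the depth by one, so it stays within the allowed depth bound ($c\log\secpar$ or $\secpar/8$).

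Correctness then follows by checking three things. First, the original chain is valid by assumption. Second, the new link verifies by correctness of the one-shot signature scheme. Third, $\qsk'$ is an honestly generated key for $\mathsf{pk}'$, so it passes the quantum check. The serial number $\mathsf{pk}_0$ is unchanged, so the received flame verifies with the original serial number, as required by the conversability definition.

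The main obstacle will be matching the exact format that $\ver$ demands. Verification may require the chain to have a specific structure: pairs of children signed together, fixed message encodings, or depth recorded so that the depth bound is enforced. The conversation step must therefore produce something indistinguishable in format from a $\clone$ output. I would handle this as follows. The receiver generates both child public keys and sends both; the sender signs the pair exactly as $\clone$ would; the receiver keeps one child key and discards the other. This way the conversed flame is literally a valid $\clone$ child, and correctness reduces to \cref{def:CloningCorrectnessAlt}.

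Finally, I would remark why this does not contradict untelegraphability. The step that requires $\qsk_d$ happens after the receiver's message $\mathsf{pk}'$ is fixed. Producing it in advance with one-way communication would require signing an unknown future key, which one-shot unforgeability prevents.
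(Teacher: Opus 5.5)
Your argument is correct and shares the paper's core. The receiver generates OSS key material locally and sends only public keys, and the sender spends its token to append one link. The received flame then verifies by OSS completeness: the new link verifies under the sender's key, which lies in the previous last link, and the receiver's fresh token passes $\oneshotvertoken$. Add the length check $T+1\le c(\secpar)$, and that is all the paper's one-line appeal to correctness of $OSS$ amounts to.

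What differs is the protocol. \cref{const:fireFromOneShot} already fixes $\prepare:=\oneshotgen$, which produces a single pair $(r,\qsk_B)$. In $\decon$ the sender generates the sibling pair $(\widetilde{pk},\qsk'')$ itself and signs $\widetilde{pk}\|r$. $\recon$ only checks that the second key of the new last link equals $vk$ (which is $r$ in an honest run). So the paper's conversation is a $\clone$ whose two key generations are split between the parties. The receiver gets $(ch',r,\qsk_B)$, the sender can keep $(ch',\widetilde{pk},\qsk'')$, and one transfer costs one level of depth.

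Your fallback puts both key generations on the receiver's side, which consumes the sender's token. Your remedy of cloning first means each transfer in which the sender keeps a flame costs two levels of the budget, not one. The paper's split avoids exactly this, and the introduction's claim that each transfer in \cref{const:TPCME} consumes one level relies on it.

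Your observation that the output is distributed exactly like a $\clone$ output, so correctness reduces to \cref{def:CloningCorrectnessAlt}, is clean. It holds equally for the paper's protocol, though. Since the statement is about \cref{const:fireFromOneShot} as written, apply your three checks to its own $\decon$ and $\recon$ rather than substituting new algorithms; they go through verbatim. Your first guess, with single keys and branch bits, would be rejected by $\ver$, which parses every link as a pair of keys.

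Your closing intuition, that telegraphing would require ``signing an unknown future key, which one-shot unforgeability prevents,'' should not be offered as the reason for untelegraphability. \cref{prop:bad_oss} exhibits an unforgeable $OSS'$ for which this very chain structure, with depth bound $\secpar$, is telegraphable with no communication at all. The ``despite untelegraphability'' clause rests on the depth bound together with the pigeonhole argument over $2^{c(\secpar)}+1$ runs of $\bdv$ in \cref{thm:ConversableShortVariant}, and should be cited from there.
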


Conversability is not merely an additional feature of our construction. We use it to obtain one of the first constructive applications of quantum fire as a building block of another cryptographic primitive: publicly certifiable min-entropy. Previous works have mentioned the prevention of key exfiltration as an application. At first sight, the connection to certifiable min-entropy is unclear: untelegraphability is a property concerning the impossibility of communicating quantum states using classical information, whereas min-entropy is a property of a classical probability distribution. Why, then, should the untelegraphability of a flame imply anything about the entropy of its classical serial number? And even if it does, how can this entropy be certified to—and subsequently by—other parties using only classical communication?

Keyless untelegraphability and conversability, the two main properties of our construction, answer these questions in complementary ways. Keyless untelegraphability forces the serial numbers of the flames to have high min-entropy, while conversability allows the quantum witnesses needed to certify the min-entropy to be transmitted. Now, considering the clonability property of quantum fire, together with conversability, yields another interesting property: transferability.

\paragraph{Transferability.} An important question about public certification is whether it is \emph{transferable}. We say that a scheme is transferable if an accepting verifier obtains a witness that allows it to subsequently act as a prover and certify the same statement to another verifier. 

There are several different variants of transferability, depending on the resources available to the parties. First, one may ask whether transferring the
certification consumes the original prover's witness, or whether the prover
retains the ability to certify the statement to additional verifiers.
Second, transferability may require a quantum communication channel, or may
be achievable using classical communication alone. Finally, the verifier may
be required to be quantum, or certification may be possible with a classical
verifier.
Our construction (see \cref{const:TPCME}) achieves transferability, allowing the prover to retain a valid quantum witness for the min-entropy. Each transfer consumes one level of the underlying fire's cloning-depth budget, so the number of sequential transfers is bounded accordingly. Moreover, the channel needs only to be classical. The verifier, however, is quantum.

Overall, we prove in this work the following theorem:

\begin{theorem}[Main Result 3, formally stated in \cref{coro:OssToTransferablePublicCertifiable}] There is a publicly certifiable, classically transferable min-entropy
scheme, where the number of sequential transfers is bounded by the cloning
depth of the underlying conversable quantum fire.
\label{thm:informalResultPubliclyCertifiableMinEntropy}
\end{theorem}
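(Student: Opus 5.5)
The plan is to build the scheme of \cref{const:TPCME} directly on top of the conversable quantum fire of \cref{const:fireFromOneShot}, and to derive each required property from one property of the fire.

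\textbf{The scheme.} The statement to be certified is that a serial number $s$ was sampled from a high min-entropy distribution.
\begin{itemize}
\item The entropy generator runs $\spark$ to obtain a flame together with its serial number $s$. The flame serves as the quantum witness and $s$ is the public classical output.
\item Certification to a verifier runs the conversation protocol $\conv$. The prover first applies $\clone$ to its flame, keeps one clone, and converses the other one to the verifier over a classical channel.
\item The (quantum) verifier accepts iff the reconstructed state passes $\ver$ under serial number $s$.
\end{itemize}

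\textbf{Correctness and transferability.} Correctness follows from cloning correctness (\cref{def:CloningCorrectnessAlt}) composed with conversability correctness. Both clones pass $\ver$ with serial number $s$, and conversing preserves validity. Transferability is immediate: the accepting verifier now holds a valid flame for $s$, one level deeper in the clone tree. It can therefore act as a prover, and the original prover keeps its own clone.

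\textbf{Depth budget.} Each hop consumes one $\clone$ plus the depth effect of $\conv$, which I would account as at most one level. The number of sequential transfers is therefore bounded by the depth budget: $c\log\secpar$ for variant (a) and $\secpar/8$ for variant (b). Correctness holds along every path of this length.

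\textbf{Soundness.} We must show that any (possibly malicious) prover that makes a verifier accept on a serial number $s$ must have produced $s$ from a distribution with high min-entropy. Equivalently, no efficient adversary can make verifiers accept on a predictable $s$ with non-negligible probability. I would invoke \cref{clm:Min-entropyFire}, which shows that keyless untelegraphability forces high min-entropy of serial numbers for which valid flames are produced.

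The subtlety is public certifiability. The certificate must convince a fresh verifier, and the adversary chooses $s$ adaptively. Suppose an adversary makes two independent verifiers accept a fixed low-entropy $s$ with noticeable probability. Then a telegraphing adversary could proceed as follows. The sender side announces only $s$ classically. The receiver side, knowing $s$ in advance, re-runs the prover strategy independently to produce a valid flame for $s$. This contradicts keyless untelegraphability (\cref{def:StrongTelegraphingSecurity}), since the adversary generates both the flame and the serial number.

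Formalizing this reduction is the step I expect to be the main obstacle. The issue is that the prover's strategy may succeed only with some probability on a guessed $s$. I would handle this with a guessing argument: high acceptance on some $s^*$ with probability $2^{-k}$ for small $k$ lets both the sender and receiver hit $s^*$ with probability about $2^{-2k}$. This matches the super-logarithmic bound for variant (a) and the linear bound for variant (b).

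Finally, instantiating the fire from one-shot signatures via \cref{coro:OssToConversableFire} gives the formal corollary \cref{coro:OssToTransferablePublicCertifiable}.
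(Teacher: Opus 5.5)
Your high-min-entropy argument matches the paper's. The one step to make explicit is that you compose the prover $\mathcal P$ with the honest quantum certifier: run $\prepare$, feed $\mathcal P$, run $\recon$. "Re-running the prover strategy" alone yields only classical messages, not a flame. The composition is itself a QPT flame generator $\mathcal L$ with $\tildever(\mathcal L(\secparam))$ distributed exactly as $\widetilde{\certver}\langle\mathcal P,V\rangle$. After that, \cref{clm:Min-entropyFire} applies verbatim, and the adaptivity issue you worried about disappears.

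\textbf{Gap: the linear bound.} You claim the same guessing argument "matches the linear bound for variant (b)"; it does not. Keyless untelegraphability only rules out non-negligible success. If the most likely accepted serial number has probability $2^{-k}$ with $k=\Theta(\secpar)$, your telegraphing pair wins with probability about $2^{-2k}$. That is negligible and contradicts nothing. The paper gets linear min-entropy from a separate, direct reduction to exponential unforgeability (\cref{thm:min-entropy-linear-fire}):
\begin{enumerate}
\item the forger runs $\mathcal L$ about $2^{\secpar/4+2}$ times;
\item with constant probability it collects at least $2^{\secpar/8+1}$ valid chains rooted at the most likely serial number $s^*$;
\item the out-degree-two/BFS pigeonhole argument then yields an OSS forgery in time $2^{\secpar/4}\cdot\mathrm{poly}(\secpar)$.
\end{enumerate}
This is why the linear part of \cref{coro:OssToTransferablePublicCertifiable} assumes $\frac13$-exponential unforgeability. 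The weaker $h>\frac18$ suffices only for untelegraphability of variant (b).

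\textbf{Depth accounting.} Applying $\clone$ and then $\conv$ adds two links per hop: one from $\clone$ and one from $\decon$. Because of the length check in $\ver$, only $c(\secpar)/2$ sequential transfers remain correct in your scheme. So your claim that correctness holds for $\secpar/8$ hops in variant (b) is false for your scheme. The extra $\clone$ is unnecessary. $\decon$ already samples a fresh pair $(\widetilde{pk},\qsk'')$ and signs $\widetilde{pk}\,\|\,r$, so the sender keeps a valid flame one level deeper while the receiver gets the sibling branch. The paper's \cref{const:TPCME} uses $\conv$ alone, costing one level per hop. For variant (a) your version only rescales $c$, but for variant (b) it halves the transfer bound.
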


\begin{remark}
    The construction above requires the verifier to have a quantum computer.
    We also give a modified construction with a classical verifier, under the
    additional assumption that the underlying one-shot signature scheme has
    classically generable verification keys. This modification comes at the cost of transferability: the classical verifier does not obtain the quantum witness needed to subsequently act as a prover.
\end{remark}

As with quantum fire, we can achieve either publicly certifiable super-logarithmic min-entropy or publicly certifiable linear min-entropy, depending on the unforgeability guarantee of the one-shot signature scheme.

\begin{table}[t]
\centering
\footnotesize
\setlength{\tabcolsep}{3pt}
\renewcommand{\arraystretch}{1.45}
\begin{tabular}{@{}lcccc@{}}
\toprule
                           & \cite{NZ24}                                                   & \cite{BNZ24}       & \cite{CGS25}       & \textbf{This work} \\
\midrule
\rowcolor{gray!12} Model                      & {\renewcommand{\arraystretch}{1}\begin{tabular}{@{}c@{}}unitary\\quantum oracle\end{tabular}} & classical oracle   & classical oracle   & \emph{standard}    \\
Public-key / keyless       & keyless                & both               & public-key         & keyless            \\
\rowcolor{gray!12} Assumptions                & ---                    & {\renewcommand{\arraystretch}{1}\begin{tabular}{@{}c@{}}group\\pre-action\end{tabular}} & OSS & OSS \\
Provably untelegraphable   & \cmark                 & \xmark             & \cmark             & \cmark             \\
\rowcolor{gray!12} Conversability             & ?                      & ?                  & \xmark             & \cmark             \\
Unconversability           & ?                      & ?                  & \cmark             & \xmark             \\
\rowcolor{gray!12} Keyless untelegraphability & \xmark                 & \xmark             & \xmark             & \cmark             \\
\bottomrule
\end{tabular}
\caption{Comparison of known quantum fire constructions. \cmark{} = holds, \xmark{} = provably does not hold, ? = open, --- = none (unconditional); OSS = one-shot signatures. Rows are phrased so that \cmark{} is desirable. Conversability (transfer via interaction, used by our min-entropy application) and unconversability (a stronger guarantee ruling out such transfer) are mutually exclusive, but either may be desired, hence the two rows.}
\label{table:constructionComparison}
\end{table}

\subsection{Technical Overview}
\label{sec:tech_overview}

We now sketch our quantum-fire construction from one-shot signatures
and the main ideas behind its keyless untelegraphability and conversability
(\cref{thm:fire_standard_model,thm:informalConversability}). Before diving into the technical details, we recommend taking a quick look at \cref{app:visual_construction}, which gives a graphical presentation of the construction and may help build intuition.
In order to $\spark$ a flame, we generate a one-shot signature verification key $pk$ and quantum signing key $\qsk$, where the serial number is set to be $pk$, and the fire state is $\qsk$. Intuitively, this seems problematic. Recall that in a one-shot signature scheme, it should be computationally hard to find two signed messages that verify with the same verification key. Since a single copy of the quantum signing key allows signing any message, two information-theoretic clones would break the security of the one-shot signature scheme. In other words, it must be computationally hard to create information-theoretic clones of $\qsk$. The key observation is that quantum-fire cloning requires only functional cloning: the resulting states need not be two copies of $\qsk$; they only need to pass verification with the same serial number. Thus, given a flame $\qsk$ with serial number $pk$, the cloner creates two fresh OSS key pairs $(pk_0,\qsk_0)$ and $(pk_1,\qsk_1)$.
Then it runs $\sigma\gets \oneshotsign(\qsk,(pk_0,pk_1))$. The clones have a classical register that consists of the signed message for both clones (i.e., $(pk_0,pk_1,\sigma)$); the first clone's quantum register is $\qsk_0$ and the second's is $\qsk_1$. The classical register of each flame contains the authenticated root-to-leaf chain leading to its current verification key. Each link consists of two child verification keys, along with a signature over that pair using the selected verification key from the preceding link. The quantum register consists of a quantum signing key associated with one of the two verification keys in the last link.
More generally, repeated applications of the $\clone$ algorithm therefore form an authenticated binary tree of $\oss$ verification keys, rooted at the original serial number $pk$. Each cloned flame corresponds to a root-to-leaf path in this tree together with the quantum signing token associated with the leaf. 
Overall, the verification algorithm checks that:
\begin{enumerate}
    \item The signed message in each link passes verification with one of the two verification keys in the preceding link. (For the chain's head, the serial number functions as the only valid verification key.)
    \item The quantum register is a valid one-shot signing key relative to one of the two verification keys in the tail of the chain.
\end{enumerate}

We have two variants of the construction: in the logarithmic variant, formalized in \cref{const:fireFromOneShot}(a), along any root-to-leaf path, $\clone$  can be repeated at most $c\log(\secpar)$ times, for any fixed $c$ chosen in advance, and in the linear variant (see \cref{const:fireFromOneShot}(b)), at most $\frac{\secpar}{8}$ times.
We argue that the logarithmic scheme is keyless untelegraphable (see \cref{def:StrongTelegraphingSecurity}), based on the unforgeability of the OSS. As a warm-up, we first consider a weaker question: Can an adversary create two valid one-shot signing keys associated with the same verification key? The answer is no, since the two keys could be used to sign two different messages; these signed messages would be accepted with the same verification key. 
Suppose an adversary $\adv$ can generate a serial number $s$ and a classical message $t$ so that $\bdv$ can recover a valid flame with respect to $s$. Suppose, for simplicity, that the success probability of the adversaries is $1$, and that all the chains that are generated are exactly of length $c\cdot\log(\secpar)$ (the proof can easily be extended to the general case. See \cref{rem:nonPerfectAdvs}). Consider an adversary $\mathcal{F}$ that runs $(s,t) \gets \adv$ and then runs $\bdv(t)$ $\secpar^c+1$ times. Each such run creates a signature chain starting at $s$. We interpret that chain as part of a tree\footnote{Technically, this may not be a tree (and may have loops). The argument given below holds for the BFS tree generated from the graph described here; see the proof for the full details.}: if we see a valid signature on $(pk_0,pk_1)$ under $pk$, we attach $pk_0$ and $pk_1$ as $pk$'s children. Unless an OSS forgery occurs, the out-degree cannot exceed two: signed messages form pairs of verification keys, and having more than two children means that we have two valid signed messages with the same OSS verification key (i.e., an OSS forgery). Since the length of the chain is at most $c\cdot\log(\secpar)$, the binary tree has at most $\secpar^c$ leaves. Since $\bdv$ generated $\secpar^c+1$ such chains, by the pigeonhole principle, there must be two chains that terminate at the same leaf. As such, we have two one-shot signing keys, each associated with the same verification key as that leaf. As discussed in the previous paragraph, this results in a one-shot signature forgery and should therefore not be possible.

The linear scheme is keyless untelegraphable by the same argument, assuming
that the underlying OSS is exponentially unforgeable. In this case the cloning depth is $\secpar/8$, so the reduction may need to generate roughly $2^{\secpar/8}$ chains. Consequently, the resulting OSS forger runs in time
$\widetilde O(2^{\secpar/8})$, which is ruled out by sufficiently strong
exponential unforgeability.

Despite being keyless untelegraphable, the construction is conversable: Alice can send a fire state to Bob via interactive classical communication, using the following protocol. Bob generates a quantum one-shot signature key pair $(pk_B,\qsk_B) \gets \oneshotgen(\secparam)$, and sends $pk_B$ to Alice; Alice also generates a new key pair $(pk_A,\qsk_A) \gets \oneshotgen(\secparam)$. She uses the quantum register of the fire state (in the form of a one-shot signing key) to generate a signature $\sigma\gets \oneshotsign(\qsk,(pk_A,pk_B))$, appends the signed message to the chain, and sends the serial number $s$ and the resulting chain to Bob. At this point, both Alice and Bob have a quantum signing key and a chain of signatures from the flame's serial number to the verification key corresponding to their quantum signing key. Crucially, Bob generates the quantum state $\qsk_B$ locally before Alice responds. Alice never sends a quantum state; she only authenticates $pk_B$ as a descendant of her current flame. Thus, interaction enables Bob to obtain a valid flame even though one-way classical telegraphing remains impossible. The protocol can be repeated sequentially for logarithmically many rounds in the logarithmic variant, and for up to $\secpar/8$ rounds in the linear variant.

We next explain an important implication of keyless untelegraphability: the serial number of any efficiently generated valid flame must have super-logarithmic min-entropy. To see this, consider a QPT adversary that generates $(s,\quant{f})\gets \adv(\secparam)$, and assume for simplicity that the adversary always generates states that pass verification. 
Suppose that the min-entropy of the random variable $s$ is merely logarithmic, and let $s^*$ be the most probable outcome. The logarithmic min-entropy means that $s^*$ occurs with probability $\Omega(\secpar^{-c})$ for some constant $c$. 
Then, if Alice and Bob both run $\adv$, the probability that both generate valid flames with the serial number $s^*$ is at least $\Omega(\secpar^{-2c})$, contradicting keyless untelegraphability. 
(Note that Alice did not send Bob even a single bit, even though it is allowed.) For the linear construction (see \cref{const:fireFromOneShot}(b)), the argument is similar, but based directly on the exponential unforgeability guarantees of the OSS. These two results are shown formally in \cref{clm:Min-entropyFire,thm:min-entropy-linear-fire}.
We note that a variation of this idea was used by Zhandry in the context of quantum lightning, which is further discussed in the related-work section (\cref{sec:related_works}).

We now use these properties to build publicly certifiable min-entropy (defined in \cref{sec:publicCertifaiableMinEntropy}) from quantum fire. The construction is natural, given the properties of the conversable quantum fire that were discussed above: Alice generates a flame and uses its serial number as the min-entropy sample; to certify it, Alice and Bob run the $\conv$ protocol, at the end of which Bob obtains and verifies a valid flame with the same serial number. Bob can now act as the prover in a further execution, which is precisely the transferability property discussed above. The guarantees rest on the fire's properties: keyless untelegraphability ensures that valid serial numbers have high min-entropy, while clonability and conversability allow flames, and hence certifications, to spread using classical communication alone. Since our quantum fire is built from one-shot signatures, this yields \cref{thm:informalResultPubliclyCertifiableMinEntropy}; formally, it follows from the min-entropy implication above together with \cref{thm:ConversableFireToPublicallyCertifiableSecurity} in \cref{sec:PubliclyCertifiableMinEntropyFromconversable}.

So far, certification requires a quantum verifier. We show that for a decomposable conversable fire scheme, and assuming the OSS has classically generable verification keys, the verifier can instead remain classical. Informally, decomposability separates the classical authenticated chain from the quantum signing key, allowing the classical part of the conversability and verification procedures to be executed without generating the quantum token.

In our construction, the verifier can therefore generate only a classical verification key and verify the authenticated chain classically. The resulting protocol preserves the min-entropy guarantee but loses transferability: because the verifier does not possess the corresponding quantum signing key, it cannot subsequently serve as a prover for another verifier.

\subsection{Related Work}
\label{sec:related_works}

\paragraph{One-Shot Signatures.}
Amos, Georgiou, Kiayias, and Zhandry~\cite{AGKZ20} defined the notion of one-shot signatures as a signature scheme where every secret key can sign at most one (arbitrary) message. They showed a construction relative to a classical oracle; unfortunately, there was a bug in the security proof. Shmueli and Zhandry \cite{SZ25} showed two constructions of one-shot signatures from non-collapsing hash functions, one relative to a classical oracle, and another, the first in the standard model, based on subexponential indistinguishability obfuscation (iO), subexponentially secure one-way functions, and LWE. Shmueli and Zhandry's oracle-relative construction is shown to be exponentially unforgeable relative to the corresponding classical oracle. These two constructions are the main starting point of our work. Çakan, Goyal, and Shmueli~\cite{CGS25} then defined the property of incompressibility of one-shot signatures, a property regarding the inability to generate polynomially many signatures from a polynomial-sized string. One of the main drawbacks of incompressibility is that it applies in the oracle model and has no natural counterpart in the standard model. They further claimed that the oracle-based construction of \cite{SZ25} is incompressible, and used this to construct quantum fire (see the paragraph below); the claim was later shown to fail, and was repaired by Huang, Shmueli, Vaikuntanathan, and Zhandry~\cite{HSVZ26} via a modified oracle-model construction, recovering the results of \cite{CGS25} with perfect correctness.
They also improved the parameters of these schemes, allowing shorter keys and the signing of messages of length linear in the security parameter.

Our $\conv$ protocol is closely related to the signature-chain delegation
mechanism used by Amos, Georgiou, Kiayias, and Zhandry~\cite{AGKZ20} in
their blockchain-less cryptocurrency construction. There, authority is
transferred by signing a recipient's freshly generated verification key.
Our construction modifies this mechanism by signing pairs of fresh
verification keys, thereby turning the authenticated chain into a branching
structure that supports cloning. We use this structure, together with the
bounded depth of the chains, to obtain keyless untelegraphability and
conversability.

\paragraph{Quantum Fire.}
Nehoran and Zhandry \cite{NZ24} established an oracle separation between two significant quantum no-go theorems: the no-cloning and no-telegraphing theorems. Their result sheds light on the possibility of quantum states that are clonable yet untelegraphable. Building on this insight, Bostanci, Nehoran, and Zhandry~\cite{BNZ24} formalized the notion of quantum fire and proposed a candidate construction based on a new group-action hardness assumption; however, the question of whether this construction is untelegraphable remained open.

Çakan, Goyal, and Shmueli~\cite{CGS25}, as stated above, defined a new property of OSS, called incompressibility; they showed that relative to an oracle, the construction of \cite{SZ25} is incompressible, and also gave an untelegraphable quantum fire construction relative to that oracle. 
One drawback of their approach is its heavy reliance on the incompressibility of one-shot signatures. Since incompressibility is only defined in the oracle model, we are not aware of an approach for a standard-model incompressible OSS (in particular, the heuristic approach in which the oracles are obfuscated would \emph{not} be an incompressible one-shot signature). The same drawback prevents a natural standard-model construction of fire based on their approach. 

By contrast, our quantum fire constructions are in the standard model (see \cref{thm:fire_standard_model}). A comparative drawback of our constructions is that the 
number of allowed sequential clone operations is bounded. 
Note, however, that a clone operation creates two clones, and therefore, the overall number of clone operations and the number of resulting clones can nevertheless be exponential in our linear-clonable scheme (and polynomial in our logarithmic-clonable scheme).

\paragraph{Certified Randomness.}
Most relevant to our work is \emph{public} certifiable min-entropy, which has already been constructed in prior works.
First, Ref.~\cite[Section 7]{AGKZ20} shows that unforgeable one-shot signatures imply publicly certifiable min-entropy, relative to a classical oracle. This is done in two steps. In the first step, a 3-round \emph{private} certifiable min-entropy protocol is shown, in which the prover uses the one-shot signature token to sign a random message drawn by the verifier. In the second step, since the protocol is a public-coin interactive protocol, it is converted to a non-interactive \emph{public} protocol via the post-quantum Fiat--Shamir transform~\cite{FS86,LZ19,DFM20}. The proof transcript can then be distributed for anyone to verify, enabling \emph{public} certifiability of the verification key's entropy for the one-shot signature. The original construction in that work is in the classical oracle model, and the Fiat--Shamir transform additionally requires a random oracle.

Another line of work was initiated by Aaronson \cite{Aar18b,Aar18c,Aar20b} and later was formalized by Aaronson and Hung~\cite{AH23}. This line of work studies publicly certifiable randomness from random circuit sampling (RCS). Aaronson and Hung show that, under the Long List Quantum Supremacy Verification (LLQSV) hardness assumption, passing the linear cross-entropy benchmark implies $\Omega(n)$ min-entropy. Unlike the one-shot signatures line of work, their reduction to certifiable min-entropy does not rely on a random oracle (though it does provide random-oracle evidence for the LLQSV assumption). A major drawback of their approach is the need for an exponential-time classical verifier.
Inspired by Aaronson's work, Bassirian, Bouland, Fefferman, Gunn, and Tal~\cite{BBFGT26} study Fourier sampling in the quantum random oracle model (QROM). They show directly that any efficient quantum prover that passes the suggested Fourier-sampling test with non-negligible probability must generate strings from a distribution with at least linear min-entropy. This yields a single-round publicly certifiable linear min-entropy scheme. Their result has unconditional black-box security in the QROM, but, similar to Aaronson and Hung's work \cite{AH23}, it requires an exponential-time classical verifier that makes an exponential number of queries to the random oracle.

Finally, Yamakawa and Zhandry~\cite{YZ24} construct publicly verifiable proofs of quantumness relative to a random oracle. Assuming the Aaronson--Ambainis conjecture~\cite{AA14}, they establish a connection between the proofs of quantumness and the min-entropy of the distribution of the proof of their algorithm. Informally, they show that if the distribution of the algorithm's output has low min-entropy, then it is sufficiently concentrated around a few particular accepting proofs. The Aaronson--Ambainis conjecture can then be used to classically approximate the relevant output probabilities of a quantum prover with low min-entropy, thereby yielding a classical prover that contradicts the proof of quantumness.

Among the approaches above, Amos et al.~\cite{AGKZ20} and Yamakawa and
Zhandry~\cite{YZ24} rely on a random oracle to obtain public certifiability,
while the construction of Bassirian et al.~\cite{BBFGT26} is in the QROM.
Aaronson and Hung~\cite{AH23}, on the other hand, avoid a random oracle,
but require an exponential-time classical verifier.

In contrast, our construction has polynomial-time verification, and its
security reduction is in the standard model. Under ordinary unforgeability
of the underlying one-shot signature scheme, we obtain high min-entropy.
Under the stronger assumption that the one-shot signature scheme is
exponentially unforgeable, our construction yields linear min-entropy.
Such exponential unforgeability is known for the oracle-relative
construction of Shmueli and Zhandry~\cite{SZ25}, but, to the best of our
knowledge, is not currently known for any standard-model one-shot signature
scheme. Thus, our linear min-entropy result gives a standard-model reduction
from exponentially unforgeable one-shot signatures to publicly certifiable
linear min-entropy with polynomial-time verification.

Very recently, Khurana, Roberts, and Tal \cite{KRT26} revisited the Yamakawa--Zhandry \cite{YZ24} protocol and proved its certifiable min-entropy guarantee unconditionally (without relying on the Aaronson-- Ambainis conjecture) against low-depth quantum adversaries. More precisely, their result applies to adversaries making $o(\log(\secpar))$ rounds of queries to the random oracle, while allowing polynomially many parallel queries within each round. In contrast, our construction is in the standard model, while theirs remains in the random oracle model, and for non-shallow adversaries, the Aaronson-Ambainis conjecture is still required. 

Our construction uses quantum fire to obtain publicly certifiable
min-entropy with polynomial-time verification and transferability over
classical communication.

Note that a min-entropy source need not be uniformly random. The more ambitious goal is publicly certifiable randomness, in which the source is guaranteed to be $\epsilon$-close to uniform. 
There are two main approaches to private certified randomness, also known as randomness expansion, for a classical verifier. The first is device-independent randomness expansion, which is based on Bell's inequality. Here, it is assumed that two (potentially malicious) non-signaling quantum devices are spatially separated. The classical verifier initiates the protocol with a short, private random seed, and the goal is to generate a longer private string that is $\epsilon$-close to uniform~\cite{CR12,MS16,CY14}, or to terminate upon detecting that the devices misbehave.

One drawback of the previously mentioned results is that they require two distant, non-signaling devices. A complementary approach, introduced in~\cite{BCM+18}, is to use a single untrusted quantum device, whose output randomness is guaranteed, under a computational hardness assumption, against computationally bounded adversaries.

A key issue that arises in both approaches is \emph{accumulation}: these protocols consist of many iterations of a sub-protocol that generates a small number of bits of min-entropy (possibly less than one). The central argument is to demonstrate that this min-entropy accumulates when the sub-protocol is repeated appropriately. We note that, in this work, we do not show such an accumulation result, as we do not use the sub-protocol approach and instead argue directly about the resulting protocol. We leave to future work the question of whether some form of min-entropy accumulation can be achieved.

Another key difference between the private and public settings discussed in this work is \emph{privacy}. In the private setting: a) the random seed must be kept secret, and b) even though the verifier's string is random, the adversary has no knowledge of the resulting random string; one example where this is crucial is when this randomness is used for cryptographic private key generation. For publicly certifiable min-entropy and publicly certifiable randomness, the randomness is intentionally \emph{not} private; hence, property b) is clearly not satisfied.

\paragraph{Quantum Lightning.}
Zhandry~\cite{Zha21} defined quantum lightning, which is a collision-free version of public-key quantum money. From the uniqueness property of quantum lightning, one can obtain a private source of certified min-entropy from the distribution of the bolt's serial numbers~\cite[Theorem 3.3]{Zha21}. While quantum lightning provides \emph{private} certified min-entropy, our approach provides \emph{publicly certifiable} min-entropy that remains \emph{transferable} over classical communication channels.

\subsection{Open Problems}
We briefly state a few open problems.

\paragraph{Publicly Certifiable Randomness.}
Our certification protocols can publicly certify that a string was generated by an algorithm whose output distribution has high min-entropy. Yet, they stop short of certifying uniform randomness.
Can the public certifiability of \emph{min-entropy} be strengthened to a notion closer to uniform randomness? As an intermediate question, is there a way to achieve entropy accumulation?

\paragraph{Unbounded Clonability.}
As noted, even our strongest scheme allows at most a linear number of sequential $\clone$ operations (see \cref{thm:exponential-oss-imply-linear-fire}). As shown in \cref{prop:bad_oss}, this restriction is unavoidable in our construction. Are there constructions of conversable quantum fire in the standard model where the number of sequential $\clone$ operations can be any polynomial? We note that the quantum fire construction in Ref.~\cite{CGS25}, which is relative to a classical oracle, allows any number of $\clone$ operations.

\section{Definitions, Notation, and Preliminaries}
\label{sec:definitions}
We use calligraphic fonts with a capital letter to represent quantum algorithms (e.g., $\oneshotsign$), capitalized sans-serif for classical algorithms (e.g., $\oneshotversign$), lowercase calligraphic font for quantum states (e.g., $\qsk$ for a quantum signing key), and lowercase letters for classical variables (e.g., $m$). For cryptographic schemes, we use all capitals (e.g., $OSS$ for a one-shot signature scheme). To avoid ambiguity, we sometimes use the scheme followed by the algorithm's name (e.g., $OSS.\oneshotsign(\cdot)$).

In all our definitions below, we assume that all algorithms receive the security parameter $\secparam$, but we sometimes omit it to avoid clutter. 

\begin{definition} [Non-$\bot$ Min-Entropy] \label{def:nonbotMinEntropy}
    For a probability distribution $D=(p_\bot,p_1,\ldots,p_n)$, we define the non-$\bot$ min-entropy as:
    \begin{equation*}
        \Hbot(D) := \min_{i\neq \bot} \log(1/p_i).
    \end{equation*}
    In most cases, the probability $p_{\bot}$ will be clear from the context.
\end{definition}
\begin{remark}
    High min-entropy implies that with high probability, the sampled Kolmogorov complexity is high as well: $\Hinf(X)\geq e$ implies that $K(X)\geq e-d-O(1)$ with probability at least $1- 2^{-d}$, for any $d$. We will not interpret or further optimize our results through the lens of Kolmogorov complexity in this work.
    \label{rem:Kolmogorov}
\end{remark}

\subsection{One-Shot Signatures}
\label{sec:oss_def}

\begin{definition}[One-shot signatures, adapted from~\cite{AGKZ20}]\label{def:oneShotSignaturesSyntax}
    A one-shot signature scheme consists of four polynomial-time algorithms $(\oneshotgen,\oneshotsign,\oneshotversign,\oneshotvertoken)$ with the following syntax:
    \begin{itemize}
        \item $\oneshotgen(\secparam)$. It takes the security parameter as input and outputs a pair $(pk, \qsk)$, referred to as the verification key (public key) and the signing token (quantum secret key).
        \item $\oneshotsign(\qsk, m)$. It takes a quantum signing token, $\qsk$, and a message to be signed, $m$, and outputs a classical signature, $sig$.
        \item $\oneshotversign(pk, m, sig)$ is a classical deterministic polynomial-time algorithm. It takes the verification key, a message to be verified, and a signature, and outputs $\top$ or $\bot$. 
        \item $\oneshotvertoken(pk,\qsk)$ is a QPT algorithm that takes a verification key and a quantum signing key and outputs a quantum secret key $\qsk'$, and a bit $b$ indicating acceptance or rejection.
    \end{itemize}
\end{definition}
Some one-shot signature constructions use a \emph{common reference string} (CRS). When instantiating the construction in this work with such constructions, our schemes will also require a setup algorithm and a CRS, but we omit the syntax for that to avoid clutter. 

\begin{remark}[Implementing $\oneshotvertoken$]
The token-verification algorithm $\oneshotvertoken$ can be implemented using
$\oneshotsign$ and $\oneshotversign$ as follows. On input $(pk,\qsk)$, prepare
a message register in the maximally mixed state over the message space, and
run a coherent implementation of $\oneshotsign(\qsk,m)$, deferring all
measurements. Next, coherently compute
$\oneshotversign(pk,m,\sigma)$ into an output register. We then uncompute the
verification circuit and apply the inverse of the coherent signing circuit,
thereby undoing the signing computation and recovering the token register.
Finally, we measure only the output register and use the resulting bit as the
output of $\oneshotvertoken$.

In particular, the acceptance probability of $\oneshotvertoken(pk,\qsk)$ is
exactly the probability that, for a uniformly random message $m$, signing
$m$ using $\qsk$ produces a signature that passes
$\oneshotversign(pk,m,\cdot)$.
\end{remark}

\begin{definition}[One-shot signatures completeness, adapted from \cite{AGKZ20}]\label{def:oneShotCorrectness}
For all messages $m$ in the message space $\mathcal{M}$,
\begin{equation*}
    \Pr\left[\oneshotversign(pk, m, \oneshotsign(\qsk, m)) = \top:\bigstack{(pk,\qsk)\gets\oneshotgen(\secparam)}\right] = 1,
\end{equation*}

and 

\begin{equation*}
    \Pr\left[\oneshotvertoken(\oneshotgen(\secparam)) = \top\right] = 1.
\end{equation*}

\end{definition}

\begin{definition}[One-shot unforgeability, adapted from~\cite{AGKZ20}]
    Given a one-shot signature scheme $OSS = (\oneshotgen,\oneshotsign,\oneshotversign,\oneshotvertoken)$, we say $OSS$ is unforgeable if for all QPT adversaries $\adv$:

    \begin{equation*}
        \Pr\left[m_1\neq m_2 \wedge a = b = \top : \bigstack{(pk,m_1,sig_1,m_2,sig_2)\gets\adv(\secparam) \\ a\gets\oneshotversign(pk,m_1,sig_1) \\ b\gets\oneshotversign(pk,m_2,sig_2)}\right] \leq \negl.
    \end{equation*}

    We say that the scheme is \emph{exponentially} $h$-unforgeable if for every adversary $\adv$ that runs in time $O(2^{h'\secpar})$ for $h'<h$, the inequality above holds. 
    \label{def:OneShotUnforgeability}
\end{definition}

\subsection{Quantum Fire}
\label{sec:fire}

\begin{definition}[Adapted from~\cite{BNZ24}] \label{def:BNZfireSyntax}
    A quantum fire scheme consists of three QPT algorithms $(\spark,\clone,\ver)$ where:

\begin{itemize}
    \item $\spark(\secparam)$ takes the security parameter and outputs a serial number $s$ and a quantum fire state $\quant{f}^s$, which we refer to as a flame.
    \item $\clone(s, \quant{f}^s)$ takes a serial number $s$ and a flame $\quant{f}^s$, and outputs two registers $A$ and $B$ in some potentially entangled state $\quant{c}^s_{AB}$.
    \item $\ver(s,\quant{f}^s)$ takes a serial number $s$, and an alleged flame, and accepts or rejects. In $c$-clonable schemes, which will be introduced later (see \cref{def:CloningCorrectnessAlt}), we also output an integer $CL$ (Clones Left), which states the number of $\clone$ operations that can still be executed on the flame.
\end{itemize}
\end{definition}
We also define an algorithm $\tildever$ that runs $\ver(s,\quant{f}^s)$ and outputs $s$ if $\ver$ accepts, and $\bot$ otherwise.

Note that there can be multiple variants of quantum fire schemes~\cite{BNZ24,CGS25}, similar in spirit to private quantum money, public-key quantum money, and quantum lightning. In the definition above, we use the keyless variant (similar to quantum lightning, which has no private key).  

\begin{definition}[Correctness \cite{BNZ24}] \label{def:fireCorrectness}
\begin{equation*}
    \Pr\left[\ver(s,\quant{f}^s) = \top : (s, \quant{f}^s) \leftarrow \spark(\secparam) \right] = 1
\end{equation*}
\end{definition}

\begin{definition}[Cloning Correctness] \label{def:CloningCorrectnessAlt} We say that a quantum fire scheme is $c$-clonable if
\begin{equation*}
        \Pr\left[\ver(s,\quant{f}) =\top :
        \quant{f} \leftarrow \clone^{c(\secpar)}(\spark(\secparam)) \right] = 1,
    \end{equation*}
where $\clone^{c(\secpar)}$ is the algorithm that applies $\clone$ sequentially $c(\secpar)$ times, each time to one of the resulting registers.
We say that the scheme is one-time clonable if the above holds for $c(\secpar)=1$, as defined in~\cite{BNZ24}. 
\end{definition}

Next, we define two security notions for quantum fire: untelegraphability and keyless untelegraphability. We note that the original notion of untelegraphability, introduced in~\cite{BNZ24}, is not used in this work. Instead, we use a stronger notion, which we call \emph{keyless} untelegraphability, which is defined immediately below.  
\begin{definition}[Untelegraphability~\cite{BNZ24}] \label{def:BNM_untelegraphability}
We say the scheme is an untelegraphable quantum-fire scheme if for all (non-entangled) QPT adversaries $\adv$ and $\bdv$,
\begin{equation*}
    \Pr\left[\text{ Adversaries win }\right] \leq \negl,
\end{equation*}
where the untelegraphability game is defined as follows:

\pseudocodeblock{
\textbf{C} \<\< \textbf{A} \<\< \textbf{B}
\\[0.1\baselineskip][\hline] \\[-0.5\baselineskip]
(s, \quant{f}^s) \leftarrow \spark(\secparam) \\
\< \sendmessageright*{s,\quant{f}^s} \\
\<\<\dots \\
\<\<\< \sendmessageright*{t \in \{0,1\}^*} \\
\<\<\<\< \dots \\
\< \sendmessageleftx{8}{\quant{g}} \\
b \gets \ver(s, \quant{g}) \\
}

The adversaries win if $b=\top$.
\end{definition}
We define a stronger variant:
\begin{definition}[Keyless untelegraphability] \label{def:StrongTelegraphingSecurity}
    We use the same definition as untelegraphability (\cref{def:BNM_untelegraphability}), except we change the security game as follows:
    \pseudocodeblock{
        \textbf{C} \<\< \textbf{A} \<\< \textbf{B}
        \\[0.1\baselineskip][\hline] \\[-0.5\baselineskip]
          \< \sendmessageright*{\secparam} \\
        \<\<\dots \\
        \< \sendmessageleft*{s}\<\< \sendmessageright*{t \in \{0,1\}^*} \\
        \<\<\<\< \dots \\
        \< \sendmessageleftx{8}{\quant{f}} \\
        b\gets \ver(s,\quant{f}) \\
    }
\end{definition} 
\cref{def:StrongTelegraphingSecurity} implies \cref{def:BNM_untelegraphability}, since $\adv$ could always choose to generate $s$ honestly by running the $\spark$ algorithm. 

\begin{definition}[Min-Entropy of Quantum Fire] \label{def:minEntropyFire}
We say that a fire scheme has $e(\secpar)$ min-entropy if for every QPT adversary $\mathcal{L}$,
\[ \Hbot(\tildever(\mathcal{L}(\secparam))) \geq e(\secpar). \]
(Recall \cref{def:nonbotMinEntropy} for $\Hbot$, and \cref{def:BNZfireSyntax} for $\tildever$.) We say that a fire scheme has high min-entropy if it has $c \cdot \log (\secpar)$ min-entropy for every constant $c>0$. We say it has linear min-entropy if there exists a constant $c>0$ such that the scheme has $c \secpar$ min-entropy.\footnote{Note that linear min-entropy implies high min-entropy, and not necessarily vice versa.}
\end{definition}

Quantum fire is defined to be untelegraphable, meaning a flame cannot be sent via one-way classical communication. 
Conversable fire, a notion we introduce in this work, \emph{can} be sent via two-way (1-round) classical communication:
\begin{definition}[Conversability] \label{def:Conversability}
A conversable quantum fire is a quantum fire scheme with three additional QPT algorithms $\prepare$, $\decon$, $\recon$. 
\begin{itemize}
    \item $\prepare(1^\lambda)$ takes the security parameter and outputs a classical preparation string and a quantum preparation state.
    \item $\decon(s,\quant{f},p)$ takes a serial number $s$, a flame $\quant{f}$, and the preparation string $p$ (a classical string). The algorithm outputs a classical telegraph string and a verification key.
    \item $\recon(s,t,vk,\quant{g})$ takes the serial number $s$, the telegraph string $t$, a verification key $vk$, and the quantum preparation state $\quant{g}$, and outputs a flame.
\end{itemize}

The $\conv$ protocol is a protocol over classical communication between a sender, who wants to send a flame, and a receiver; it is defined as follows:
\begin{enumerate}
    \item The receiver runs $(r,\quant{g})\gets\prepare(\secparam)$, and sends $r$ to the sender.
    \item The sender starts with a pair of a serial number and a flame, $(s,\quant{f})$, runs $d,vk \gets \decon(s,\quant{f},r)$, and sends $d,vk$ to the receiver.
    \item The receiver runs $\quant{f'} \gets \recon(s,d,vk,\quant{g})$.
\end{enumerate}

We say the $\conv$ protocol succeeds if $\recon$ does not output $\bot$.

For $c$-clonable schemes, let $(s,\quant{f}^s)$ be the result of $\spark(\secparam)$ followed by at most $c(\secpar)$ combinations of $\clone$ operations and $\conv$ protocol executions. We say that the scheme is conversable if:
\begin{equation*}
    \Pr\left[\ver(s,\quant{f}') = \top :\bigstack{
(r,\quant{g})\gets\prepare(\secparam) \\ 
(d,vk)\gets\decon(s,\quant{f},r) \\
\quant{f}'\gets\recon(s,d,vk,\quant{g})}\right] 
= 1.
\end{equation*}
    
\end{definition}

The definition can be easily extended to $n$-round conversability to allow multiple rounds of interactive communication, but this is not needed in our work.

\subsection{Publicly Certifiable Min-Entropy over Classical Communication}
\label{sec:publicCertifaiableMinEntropy}
\begin{definition} [Publicly Certifiable Min-Entropy]  \label{def:publicCertifiableSyntax}
    A publicly certifiable min-entropy scheme consists of three QPT algorithms ($\certgenent$, $\certclonewitness$, $\certver$) where:
    \begin{itemize}
        \item $\certgenent(\secparam)$ takes the security parameter and generates a string $s$ and a quantum witness $\quant{w}_s$.
        \item $\certclonewitness(s, \quant{w}_s)$ takes a string, $s$ and a quantum witness, $\quant{w}_s$, and outputs a quantum state with two registers $\quant{w}_1, \quant{w}_2$, each of which is an individual quantum witness. 
        \item $\certver\langle P, V\rangle$ is a polynomial-time protocol, over classical communication, between a quantum prover and a quantum verifier, where the prover takes a string and a quantum witness $\quant{w}$, and the verifier's input is the security parameter. The verifier ends with a classical bit representing acceptance or rejection, a classical string $s'$, and a quantum witness $\quant{w}'$.
    \end{itemize}
\end{definition}

Similarly to quantum fire (see \cref{def:BNZfireSyntax}), we define an algorithm $\widetilde{\certver}$ that runs $\certver$ and outputs $s$ if the verifier accepts or $\bot$ otherwise.

\begin{definition} [$c$-clonability] \label{def:CloningWitnessCorrectness}
    We say a publicly certifiable min-entropy scheme is $c$-clonable for a function $c$ if the following holds:
    \begin{equation*}
        \Pr\left[\certver(s,\quant{w}) = \top  :\quant{w}\gets\certclonewitness^{c(\secpar)}(\certgenent(\secparam))\right] = 1,
    \end{equation*}
    where $\certclonewitness^{c(\secpar)}$ is the algorithm that applies $\certclonewitness$ sequentially at most $c(\secpar)$ times, each time keeping either the first or the second output register.
\end{definition}

\begin{definition} [Public Certifiability Correctness] \label{def:PublicCertCorrectness}
A publicly certifiable min-entropy scheme is $c(\secpar)$-correct if 
    \begin{equation*}
        \Pr\left[\certver(s,\quant{w}_P) = \top : \quant{w}_P\otimes\quant{w}_V\gets \certver^{c(\secpar)}(\certgenent(\secparam))\right] = 1,
    \end{equation*}
    where $\certver^{c(\secpar)}$ is the protocol that runs the $\certver$ protocol sequentially for $c(\secpar)$ repetitions. 
\end{definition}

\begin{definition} [High Min-Entropy Certifiability] \label{def:publicCertifiableMin-Entropy} 
    We say that a publicly certifiable min-entropy scheme has $e(\secpar)$ min-entropy if for every QPT adversary $\mathcal P$,
\begin{equation*}
        \Hbot\left[ \widetilde{\certver}\langle \mathcal P, V\rangle\right] \geq e(\secpar).
\end{equation*}

    We say that a publicly certifiable min-entropy scheme has high min-entropy if, for every constant $c$, the scheme has $c \cdot \log(\secpar)$ min-entropy. We say that a scheme has linear min-entropy if there exists a constant $c$ such that it has $c\cdot \secpar$ min-entropy.

\end{definition}

We can now define the notion of transferability:

\begin{definition} [$c$-Transferability] \label{def:C-Transferability}
We say that a min-entropy scheme is $c$-transferable for a function $c=c(\secpar)$ if: 

    \begin{equation*}
        \Pr\left[\widetilde{\certver}(s,\quant{w}) \neq \bot :  (s, \quant{w})\gets \certver^{c(\secpar)}(\certgenent(\secparam))\right] = 1,
    \end{equation*}
where $\certver^{c(\secpar)}$ is a sequence of $c(\secpar)$ runs of the (honest) $\certver$ protocol, in which, after each run the output state becomes the prover's witness for the next round.
\end{definition}

\section{Conversable Quantum Fire from One-Shot Signatures}
\label{sec:ConversableFireFromOneShot}

The main result in this section is the construction of conversable quantum fire from one-shot signatures (for intuition, we recommend first taking a quick look at \cref{app:visual_construction}, which provides a graphical presentation of the construction), together with its security proof. We start by defining the two variants of the construction:
\begin{construction}[Conversable Quantum Fire from One-shot Signatures] \label{const:fireFromOneShot}
    Given a one-shot signature scheme $OSS = (\oneshotgen,\oneshotsign,\oneshotversign,\oneshotvertoken)$ and a function $c(\secpar)$, we define a $c(\secpar)$-clonable conversable quantum fire scheme. We use the term logarithmic construction, or \cref{const:fireFromOneShot}(a), for the construction below with $c(\secpar)=c \cdot \log(\secpar)$ for any constant $c$; and the linear construction, or \cref{const:fireFromOneShot}(b), with $c(\secpar)=\frac{1}{8}\secpar$. (The main role of $c$ is line \ref{it:check_length} in $\ver$.) 
    \begin{itemize}
        \item $\spark(\secparam)$:
        \begin{enumerate}
            \item $(pk,\qsk) \gets \oneshotgen(\secparam)$
            \item Output $(s:=pk, \quant{f}^s:=\emptyset,\qsk)$ \footnote{The classical register of the flame is an empty string in the beginning.}
        \end{enumerate}
        \item $\clone(s,\qsk)$:
        \begin{enumerate}
            \item Let $\qsk'$ be the quantum register in $\qsk$, and let $sk''$ be the first classical register. 
            \item $(\qsk_{0},pk_{0},\qsk_{1},pk_{1}) \gets \oneshotgen(\secparam)^{\tensor 2} $.
            \item $sig \gets \oneshotsign(\qsk',(pk_0||pk_1))$.
            \item Set $\qsk_0:=((sk'',pk_0,pk_1,sig),pk_0,\qsk_0)$.
            \item Set $\qsk_1:=((sk'',pk_0,pk_1,sig),pk_1,\qsk_1)$.
            \item Output $\qsk_0,\qsk_1$.
        \end{enumerate}
        \item $\ver(s, \qsk)$: 
         \begin{enumerate}
            \item Interpret the first classical register in $\qsk$ as $(pk^1_0,\allowbreak pk^1_1,\allowbreak sig_1,\allowbreak pk^2_0,\allowbreak pk^2_1,\allowbreak sig_2,\allowbreak \ldots,\allowbreak pk^T_0,\allowbreak pk^T_1,\allowbreak sig_T)$, the second classical register as $pk$, and the quantum part $\qsk'$; and denote $pk_0^0:=s$, and $pk_1^0:=s$. 
            \item \label{it:check_length} Reject if $T>c(\secpar)$.
            \item For every $t$ from $1$ to $T$:
            \begin{enumerate}
                \item $a_t \gets \oneshotversign_{pk^{t-1}_{0}}(pk^{t}_0||pk^t_1,sig_t)\vee \oneshotversign_{pk^{t-1}_{1}}(pk^{t}_0||pk^t_1,sig_t)$. 
            \end{enumerate}
            \item Reject if $(pk^{T}_{0} \neq pk \wedge pk^{T}_{1} \neq pk)$.
            \item $a,\widetilde\qsk\gets \oneshotvertoken(pk,\qsk')$ \label{it:vertoken}
            \item $CL = c(\secpar) - T$. \label{it:CL_calc}
            \item Output $a \wedge\left( \bigwedge_{t=1}^T a_t\right), CL$. \label{it:bigand}
        \end{enumerate}
        \item $\prepare(\secparam)$:
        \begin{enumerate}
            \item Output $\oneshotgen(\secparam)$.
        \end{enumerate}
        \item $\decon(s,\qsk,r)$:
        \begin{enumerate}
            \item Interpret $\qsk$ as $ch,pk,\qsk'$.
            \item $(\widetilde{pk},\qsk'')\gets\oneshotgen(\secparam)$.
            \item $sig\gets\oneshotsign(\qsk',\widetilde{pk} || r)$.
            \item $ch':= (ch,\widetilde{pk},r, sig)$. (Note: this chain is one link longer than the one we started with.) 
            \item Output $d:=ch',vk:=r$.
        \end{enumerate}
       \item $\recon(s,d,vk,\quant{sk}')$:
            \begin{enumerate}
                \item Interpret $d$ as $ch'$, and let $vk'$ be the second verification key in the last link of $ch'$.
                \item If $vk\neq vk'$, reject. Otherwise, output
                $(ch',vk,\quant{sk}')$.
            \end{enumerate}
    \end{itemize}
\end{construction}

Next, we analyze the case in which the clonability is logarithmically bounded (\cref{const:fireFromOneShot}(a)).
\begin{theorem} \label{thm:ConversableShortVariant}
    For any constant $c$, \cref{const:fireFromOneShot}(a) is a keyless untelegraphable (see~\cref{def:StrongTelegraphingSecurity}) conversable $c \cdot \log(\secpar)$-clonable fire scheme if $OSS$ is an unforgeable one-shot signature scheme (see \cref{def:OneShotUnforgeability}).
\end{theorem}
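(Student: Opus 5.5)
The proof splits into the correctness parts (correctness, $c\log(\secpar)$-clonability, conversability) and the security part (keyless untelegraphability).

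\textbf{Correctness, clonability and conversability.} These follow directly from OSS completeness (\cref{def:oneShotCorrectness}). By induction on the number $T \le c\log(\secpar)$ of $\clone$ and $\conv$ steps applied after $\spark$, every honestly produced flame has two properties. First, its chain $(pk^t_0,pk^t_1,sig_t)_{t\le T}$ is such that each $sig_t$ verifies on $pk^t_0\|pk^t_1$ under the key selected at level $t-1$ (or under $s$ when $t=1$). Second, its second classical register equals one of $pk^T_0,pk^T_1$, and its quantum register is a fresh OSS token for that key. The inductive step holds because $\clone$ and $\decon$ each sign exactly one message with the current token and append one link. In $\conv$, the receiver's key is $r=pk_B$, which is placed in the second slot, so $\recon$'s check $vk=vk'$ passes. $\oneshotvertoken$ then accepts with probability $1$, and the length check $T\le c\log(\secpar)$ holds by assumption.

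\textbf{Keyless untelegraphability.} Suppose $(\adv,\bdv)$ win with non-negligible probability $\epsilon$. I would build a forger $\mathcal F$ as follows. It runs $(s,t)\gets\adv(\secparam)$, then runs $\bdv(t)$ independently $N$ times, and after each run applies $\ver(s,\cdot)$ to the output. Whenever verification accepts, it coherently checks the token and records the classical chain together with the leaf key $pk$ and the post-verification token $\widetilde{\qsk}$. By averaging over $(s,t)$, with probability at least $\epsilon/2$ over $\adv$ the quantity $p_t:=\Pr[\bdv(t)\text{ wins}]$ is at least $\epsilon/2$. Choosing $N=\lceil 4(\secpar^c+1)/\epsilon\rceil$ (polynomial if $\epsilon$ is inverse-polynomial, and applying the argument at infinitely many $\secpar$ in general), a Chernoff bound gives at least $\secpar^c+1$ accepted runs with constant probability.

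From the accepted chains, $\mathcal F$ builds a directed graph on verification keys. For each link at level $t$ whose signature verifies under a parent key $u$ from level $t-1$, it adds edges from $u$ to both keys of the signed pair. It then takes the BFS tree rooted at $s$, truncated at depth $c\log(\secpar)$. Two cases arise. If some key $u$ has valid signatures on two distinct messages, $\mathcal F$ outputs them as a forgery. Otherwise every node has at most two children, so the set of keys reachable within depth $c\log(\secpar)$ that can serve as leaves has size at most $2\cdot 2^{c\log\secpar}=O(\secpar^c)$. Adjusting the count (take $N$ about $2\secpar^c+1$ accepted runs) and applying pigeonhole, two accepted runs end at the same leaf key $pk$ with valid tokens $\widetilde{\qsk}_1,\widetilde{\qsk}_2$. $\mathcal F$ signs distinct messages $m_1\neq m_2$ with them.

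\textbf{Main obstacle.} The delicate step is this last one. Acceptance of $\oneshotvertoken$ does not guarantee that signing a \emph{particular} message succeeds, only that signing a uniformly random message succeeds with the acceptance probability. To handle this, $\mathcal F$ signs independent uniformly random messages $m_1,m_2$ (distinct with overwhelming probability). I would then argue that the post-measurement state $\widetilde{\qsk}$ conditioned on acceptance yields valid signatures with probability non-negligibly close to $1$ on average. This requires a gentle-measurement or Marriott--Watrous style amplification, or alternatively simply repeating the run until a signature verifies, since failed runs are discarded and the independent $\bdv$ runs supply fresh tokens. The accounting for the BFS/loop subtleties (one key appearing at several depths) is handled by the BFS tree, and non-perfect adversaries are handled by the averaging above. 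Together these yield a polynomial-time OSS forger with non-negligible success, contradicting \cref{def:OneShotUnforgeability}.
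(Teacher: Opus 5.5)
\paragraph{Comparison.} Your overall structure matches the paper's. You run $\adv$ once and $\bdv(t)$ polynomially many times, build the graph of signed links, use the inconsistency test to force out-degree at most $2$, and finish with a BFS-tree pigeonhole. One difference works in your favour. You pigeonhole over all (at most $2^{c\log(\secpar)+1}$) vertices within depth $c\log(\secpar)$, whereas the paper pigeonholes over the at most $2^{c\log(\secpar)}$ terminals. This doubles the number of runs but removes the paper's separate internal-node test: two runs ending at the same key, internal or not, already give two tokens for that key.

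\paragraph{The gap.} The problem is your last step. You run the full $\ver$, including $\oneshotvertoken$, keep the post-measurement token $\widetilde{\qsk}$, and then need two such tokens to each sign a fresh random message validly. When the adversaries' success probability is only non-negligible, nothing in the definitions supports this.
\begin{itemize}
\item Gentle measurement controls the disturbance only when the acceptance probability is close to $1$. Once $\oneshotvertoken$'s message and ancilla registers are discarded, the leftover token need not pass a second signing attempt with any useful probability.
\item Marriott--Watrous does not address this.
\item ``Repeating until a signature verifies'' fails as stated. A failed attempt may destroy the token, and a fresh run of $\bdv$ need not land on the same key, so the pigeonhole pair is lost.
\end{itemize}

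\paragraph{The fix.} The paper's proof implicitly does this: never verify with $\oneshotvertoken$ and then sign again. In each run, sign a uniformly random $m_i$ directly with the raw token and check $\oneshotversign(pk_i,m_i,\sigma_i)$ classically. By the implementation remark, this succeeds with exactly the probability that $\oneshotvertoken$ accepts. Your Chernoff bound then applies to the classical event ``chain verifies and $\sigma_i$ verifies.'' Pigeonhole over these successful runs; two of them at the same key give valid signatures on distinct messages, except with probability at most $N^2/|\mathcal{M}|$ overall.

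Equivalently, you may keep $\oneshotvertoken$'s message and ancilla registers and re-run the coherent signing. This yields a valid signature on the sampled message with certainty.

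In the perfect-success case the paper analyses, your version happens to work, because acceptance with probability $1$ leaves the token undisturbed.
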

\begin{proof}
    The correctness (\cref{def:fireCorrectness}), conversability (\cref{def:Conversability}), and the $c \log(\secpar)$-clonability (\cref{def:CloningCorrectnessAlt}) follow from the correctness of the one-shot signature scheme $OSS$.
     Assume, towards a contradiction, that the scheme is not keyless untelegraphable, i.e., there exist two adversaries $\adv$ and $\bdv$ that succeed in the security game defined in \cref{def:StrongTelegraphingSecurity}.  For simplicity, we assume their success probability is 1 (the general case, where the success probability is non-negligible, can be shown similarly using a simple repetition argument). We define the following forger, $\mathcal{F}$:
     \begin{enumerate}
         \item $(s,t)\gets\adv(1^\secpar)$.
         \item For each $i$ from $1$ to $\lceil 2^{c(\secpar)}\rceil+1$ run
         $\qsk_i\gets\bdv(t)$.
         \item For each pair $(i,j)$:
         \begin{enumerate}
             \item \label{it:inconsistent_branches} \textbf{Inconsistency test:} Check the classical registers of $\qsk_i$ and $\qsk_j$ to see if the two branches of signed messages are inconsistent, that is, whether we can find two different messages signed relative to the same $pk$; if so, return the $pk$ and the two messages and signatures.
            \item \label{it:branch_and_oss_leaf} \textbf{Internal node test:} Denote by $pk_i$ and $pk_j$ the associated verification keys of $\qsk_i$ and $\qsk_j$. If the $i$th branch contains a signature that passes verification with $pk_j$, then use $\qsk_j$ to sign a different message, and return $pk_j$ and the two signed messages.
             \item \label{it:two_leaves}\textbf{Two terminals test:} If $pk_i=pk_j$, $sig_0 \gets \oneshotsign(\qsk_i,0)$, $sig_1 \gets \oneshotsign(\qsk_j,1)$. Return $(pk_i,0,sig_0,1,sig_1)$.
        \end{enumerate}
        \item \label{it:output_bot} Return $\bot$.
     \end{enumerate}
     Since $\adv$ and $\bdv$ are QPT, $\mathcal F$ is also a QPT algorithm. We argue that $\mathcal{F}$ always succeeds in outputting a forgery.
     We use the $\lceil 2^{c\cdot \log(\secpar)}\rceil+1$ chains to construct a directed graph $G$, as follows: The source of the graph is the node labeled $s$. Each $pk$ is a vertex, and we add an edge from a vertex $pk$ to $pk_0$ and $pk_1$ if there is a signed message of $(pk_0,pk_1)$ that passes verification with respect to $pk$. Even though it may be useful to view this graph as a tree (and, when generated honestly, it \emph{is} a tree), it need not be. 
     Clearly, if we output two signed messages in step~\ref{it:inconsistent_branches}, we have a successful forger in the sense of one-shot unforgeability (\cref{def:OneShotUnforgeability}). A failure to find such a case means that the out-degree of all vertices is at most 2. It remains to consider steps~\ref{it:branch_and_oss_leaf} and \ref{it:two_leaves}, in which $\mathcal F$ finds a
     live signing token associated with a verification key $pk$. By the
     implementation of $\oneshotvertoken$ described above, the probability
     that such a token passes $\oneshotvertoken$ is exactly the probability
     that, on a uniformly random message $m$, the signature
     $\sigma\gets\oneshotsign(\qsk,m)$ satisfies
     $\oneshotversign(pk,m,\sigma)=\top$.

     Therefore, whenever step~\ref{it:branch_and_oss_leaf} is reached, $\mathcal F$ chooses a uniformly
     random message $m'$ different from the message already signed under $pk$
     and signs $m'$ using the live token. With non-negligible probability this
     signature verifies, in which case $\mathcal F$ has obtained two distinct
     valid signed messages under the same verification key. Similarly, in
     step~\ref{it:two_leaves}, $\mathcal F$ chooses two distinct uniformly random messages and
     signs one with each of the two live tokens associated with the same
     verification key. With non-negligible probability both signatures verify,
     again yielding two distinct valid signed messages under the same
     verification key. In either case, $\mathcal F$ obtains an OSS forgery. Hence, whenever either step is reached, $\mathcal F$ outputs a valid OSS forgery.
     Lastly, we have to argue why the forger never reaches step~\ref{it:output_bot} (and hence never outputs $\bot$). 
    Recall that in step~\ref{it:check_length}, we make sure that the length of the chain of signed messages is at most $c\log(\secpar)$. Therefore, the distance between $s$ and every other vertex in the graph is at most $c\log(\secpar)$. A graph with out-degree at most $2$, in which every vertex is at distance at most $d$ from $s$, has at most $2^d$ terminals. This can be proved by induction, or by noticing that a breadth-first search (BFS) traversal on $G$ returns a binary spanning tree for $G$ with height at most $d$, and therefore the BFS tree has at most $2^d$ leaves. Every terminal in $G$ must be a leaf in the BFS tree, and therefore the number of terminals in $G$ is at most $2^d$. By the pigeonhole principle, since we have $\lceil 2^{c\cdot \log(\secpar)}\rceil +1$ such chains, we either have two terminals associated with the same verification key, and would therefore return a non-$\bot$ output in step~\ref{it:two_leaves}; or a one-shot signing key associated with an internal node, which yields a non-$\bot$ value in step~\ref{it:branch_and_oss_leaf}. 
\end{proof}

\begin{remark}
The assumption above that the adversaries succeed with probability $1$ is only for simplicity.
Suppose instead that their success probability is some non-negligible
$\varepsilon(\lambda)$. By averaging over the output $(s,t)$ of $\mathcal A$,
with non-negligible probability over $(s,t)\leftarrow\mathcal A(1^\lambda)$,
the conditional probability that $\mathcal B(t)$ outputs a valid flame with
serial number $s$ is itself non-negligible. Conditioned on such an output,
we run $\mathcal B(t)$ independently a polynomial factor more times.
By a Chernoff bound, except with negligible probability, more than
$2^{c(\lambda)}$ of these executions output valid flames. We can then apply
the same pigeonhole argument as in the proof above to these valid flames.
This is only an analysis of the successful executions; the forger need
not run $\ver$ to identify them before applying the pigeonhole argument.

The additional polynomial factor does not affect the polynomial running time in
the logarithmic variant, nor does it affect the exponential exponent of the
running time in the linear variant.
\label{rem:nonPerfectAdvs}
\end{remark}

For \cref{const:fireFromOneShot}(a), where $c(\secpar) = c\cdot\log(\secpar)$, we only need a standard level of unforgeability, i.e., security against polynomially bounded adversaries, in order to prove our construction is keyless untelegraphable. As we will now see, for \cref{const:fireFromOneShot}(b) where $c(\secpar) = \frac{1}{8}\secpar$, we need unforgeability even against exponential-time adversaries:

\begin{theorem}
    If $OSS$ is an exponentially $h$-unforgeable one-shot signature with $h>\frac{1}{8}$ (see \cref{def:OneShotUnforgeability}), then the linear construction, \cref{const:fireFromOneShot}(b), is a keyless untelegraphable $\frac{\secpar}{8}$-clonable conversable quantum fire scheme. 
    \label{thm:exponential-oss-imply-linear-fire}
\end{theorem}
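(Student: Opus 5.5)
The plan mirrors the proof of \cref{thm:ConversableShortVariant}, with the depth bound $c(\secpar)=\secpar/8$ in place of $c\log(\secpar)$ and the running time of the forger tracked explicitly.

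\textbf{Correctness properties.} Correctness (\cref{def:fireCorrectness}), $\frac{\secpar}{8}$-clonability (\cref{def:CloningCorrectnessAlt}) and conversability (\cref{def:Conversability}) follow verbatim from OSS completeness (\cref{def:oneShotCorrectness}). Every honest $\clone$ or $\conv$ step appends exactly one link whose signature verifies under the previous key. The leaf token passes $\oneshotvertoken$. Since at most $\secpar/8$ such operations are performed, the length check in line~\ref{it:check_length} of $\ver$ never rejects.

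\textbf{Keyless untelegraphability.} Suppose QPT adversaries $\adv,\bdv$ win the game of \cref{def:StrongTelegraphingSecurity} with non-negligible probability $\varepsilon$. I would build the same forger $\mathcal F$ as before:
\begin{itemize}
\item it runs $(s,t)\gets\adv(\secparam)$;
\item it then runs $\bdv(t)$ independently $N$ times, with $N=\mathrm{poly}(\secpar)\cdot(2^{\secpar/8}+1)$;
\item it then runs the inconsistency, internal-node and two-terminals tests over all pairs.
\end{itemize}
By \cref{rem:nonPerfectAdvs}, with non-negligible probability over $(s,t)$, the conditional success probability of $\bdv(t)$ is non-negligible. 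A Chernoff bound then gives, except with negligible probability, more than $2^{\secpar/8}$ valid flames among the $N$ runs.

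The graph argument carries over unchanged:
\begin{itemize}
\item Either some key has two distinct signed messages (a forgery from the inconsistency test), or all out-degrees are at most $2$.
\item Every vertex is within distance $\secpar/8$ of $s$, so the BFS tree has at most $2^{\secpar/8}$ leaves.
\item By pigeonhole, two valid flames share a leaf key (two-terminals test), or some flame's token sits at an internal node (internal-node test).
\end{itemize}
In both remaining cases, signing a fresh random message with the live token verifies with non-negligible probability, by the description of $\oneshotvertoken$. This yields two distinct valid signed messages under one key.

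\textbf{Running time (main obstacle).} The forger is no longer polynomial-time, so this step needs care. It makes $N=\widetilde O(2^{\secpar/8})$ calls to $\bdv$. The pairwise tests naively cost $N^2\approx 2^{\secpar/4}$, which would exceed $2^{h'\secpar}$ when $h$ is only slightly above $1/8$. I would therefore avoid the quadratic loop, using a dictionary keyed by verification keys:
\begin{itemize}
\item Insert every signed link $(pk\mapsto \text{message},sig)$ from all chains, detecting two distinct messages under one key on insertion (inconsistency test).
\item Insert every leaf key, detecting repeated leaves (two-terminals test) and leaves that already appear as signers (internal-node test).
\end{itemize}
This costs $\widetilde O(N\cdot\mathrm{poly}(\secpar))$. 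The forger then runs in time $O(2^{h'\secpar})$ for any $h'$ with $\frac18<h'<h$, since the polynomial factors are absorbed by the slack $2^{(h'-1/8)\secpar}$.

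The forger's success probability is non-negligible, being a product of non-negligible terms. This contradicts exponential $h$-unforgeability (\cref{def:OneShotUnforgeability}). The only delicate points are this careful time accounting and making sure the data-structure version of the tests is equivalent to the pairwise tests of the earlier proof.
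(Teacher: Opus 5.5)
Your proposal is correct and follows the paper's proof essentially step for step. It uses the same forger making $N\approx 2^{\secpar/8}$ calls to $\bdv$, the same three tests and BFS-tree pigeonhole argument, and the same key observation that the pairwise tests must be replaced by a quasi-linear search. The paper sorts the signed-link records and terminal keys where you use a dictionary, so in both versions the forger runs in time $O(2^{\secpar/8}\poly(\secpar))\subseteq O(2^{h'\secpar})$ for $\frac18<h'<h$. Folding the polynomial repetition factor of \cref{rem:nonPerfectAdvs} directly into $N$, rather than first treating perfect adversaries as the paper does, is only a presentational difference.
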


\begin{proof}
    The proof is very similar to the proof of \cref{thm:ConversableShortVariant}.
    As in the previous proof, correctness, conversability, and $c(\lambda)$-clonability follow from the correctness of one-shot signatures.

    Assume, towards a contradiction, that there exist adversaries $\adv$ and $\bdv$ that win quantum fire's keyless untelegraphability game. For simplicity, assume first that the adversaries succeed with probability $1$; the case of general non-negligible success probability is discussed at the end of the proof.

    Let $\mathcal F$ be the forger from the proof of \cref{thm:ConversableShortVariant}, with $c(\secpar)=\secpar/8$. Thus, $\mathcal F$ generates $N:=\lceil 2^{\secpar/8}\rceil+1$ flames by running $\bdv$. We implement the search in Step 3 of that forger efficiently, rather than explicitly examining every pair of generated flames.

    For each generated flame, $\mathcal F$ parses its classical authenticated chain and its terminal verification key. For every signed link appearing in one of these chains, $\mathcal F$ stores a record $(pk,m,\sigma)$ whenever $\oneshotversign_{pk}(m,\sigma)=\top$. Let $\mathcal R$ be the resulting list of records. Since there are $N$ chains, each of length at most $\secpar/8$, the total number of records in $\mathcal R$ is at most $N\cdot \secpar/8$.

    The forger sorts $\mathcal R$ according to the verification key $pk$, and also sorts the list of terminal verification keys $pk_1,\ldots,pk_N$. It then performs the following three tests:
    \begin{enumerate}
        \item \emph{Inconsistency test.} If $\mathcal R$ contains two records $(pk,m_0,\sigma_0)$ and $(pk,m_1,\sigma_1)$ with $m_0\neq m_1$, then $\mathcal F$ outputs $(pk,m_0,\sigma_0,m_1,\sigma_1)$.

        \item \emph{Internal node test.} If some terminal verification key $pk_i$ also appears as the first component of a record $(pk_i,m,\sigma)\in\mathcal R$, then $\mathcal F$ chooses any message $m'\neq m$, computes $\sigma'\gets\oneshotsign(\qsk_i,m')$, and outputs $(pk_i,m,\sigma,m',\sigma')$.

        \item \emph{Two terminals test.} If there exist $i\neq j$ such that $pk_i=pk_j$, then $\mathcal F$ chooses two distinct messages $m_0,m_1$, computes $\sigma_0\gets\oneshotsign(\qsk_i,m_0)$ and $\sigma_1\gets\oneshotsign(\qsk_j,m_1)$, and outputs $(pk_i,m_0,\sigma_0,m_1,\sigma_1)$.
    \end{enumerate}

    These are exactly the three cases considered in the proof of \cref{thm:ConversableShortVariant}. If the inconsistency test does not succeed, every vertex in the graph defined there has out-degree at most $2$. If the internal node test does not succeed, all $pk_1,\ldots,pk_N$ are terminal vertices of the graph, and if the two terminals test does not succeed, these terminal vertices are all distinct. Since every valid chain has length at most $\secpar/8$, every vertex is at distance at most $\secpar/8$ from the source. As shown in the proof of \cref{thm:ConversableShortVariant}, such a graph has at most $2^{\secpar/8}$ terminal vertices. This contradicts the fact that $\mathcal F$ generated $N=\lceil 2^{\secpar/8}\rceil+1$ distinct terminal vertices. Hence one of the three tests must succeed, and $\mathcal F$ outputs a valid OSS forgery.

    It remains to bound the running time of $\mathcal F$. The number of signed-link records is at most $N\cdot \secpar/8=O(2^{\secpar/8}\cdot\poly(\secpar))$. The three tests above can be implemented by sorting the signed-link records and the terminal verification keys. Since sorting introduces only a logarithmic overhead, and all one-shot signature verification operations take polynomial time, the total running time of $\mathcal F$ is $O(2^{\secpar/8}\cdot\poly(\secpar))$.

    Since $h>1/8$, fix any constant $h'$ such that $1/8<h'<h$. For sufficiently large $\secpar$, we have $2^{\secpar/8}\cdot\poly(\secpar)=O(2^{h'\secpar})$. Therefore, $\mathcal F$ runs within the time bound ruled out by exponential $h$-unforgeability and succeeds with probability $1$, yielding a contradiction.

    Finally, the case in which $\adv$ and $\bdv$ succeed with general non-negligible probability follows from the amplification argument in the remark following \cref{thm:ConversableShortVariant}. That argument introduces only an additional polynomial factor in the number of executions of $\bdv$, and therefore the resulting forger still runs in time $O(2^{\secpar/8}\cdot\poly(\secpar))$. Hence the same contradiction applies.
\end{proof}

From the last two theorems, we obtain the following corollary:

\begin{corollary}
    An unforgeable one-shot signature scheme implies the logarithmic
    variant of keyless untelegraphable, conversable quantum fire.
    If the one-shot signature scheme is exponentially $h$-unforgeable for some $h>\frac18$, then it also implies the linear variant of keyless untelegraphable, conversable quantum fire.
    \label{coro:OssToConversableFire}
\end{corollary}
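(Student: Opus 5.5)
The corollary follows directly from \cref{thm:ConversableShortVariant} and \cref{thm:exponential-oss-imply-linear-fire}; no new argument is needed beyond instantiating them with \cref{const:fireFromOneShot}. We treat the two variants separately.

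\emph{Logarithmic variant.} Let $OSS$ be an unforgeable one-shot signature scheme in the sense of \cref{def:OneShotUnforgeability}. Fix any constant $c>0$ and instantiate \cref{const:fireFromOneShot}(a) with $OSS$ and $c(\secpar)=c\log(\secpar)$. By \cref{thm:ConversableShortVariant}, the resulting scheme has four properties:
\begin{itemize}
\item correctness in the sense of \cref{def:fireCorrectness};
\item $c\log(\secpar)$-clonability in the sense of \cref{def:CloningCorrectnessAlt};
\item conversability in the sense of \cref{def:Conversability}, via the algorithms $\prepare$, $\decon$ and $\recon$ of the construction;
\item keyless untelegraphability in the sense of \cref{def:StrongTelegraphingSecurity}.
\end{itemize}
Since $c$ was arbitrary, this is exactly the logarithmic variant. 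If $OSS$ uses a CRS, the fire scheme inherits the same setup, as noted after \cref{def:oneShotSignaturesSyntax}.

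\emph{Linear variant.} Suppose $OSS$ is exponentially $h$-unforgeable for some $h>\frac18$. Instantiate \cref{const:fireFromOneShot}(b), i.e.\ take $c(\secpar)=\secpar/8$. \cref{thm:exponential-oss-imply-linear-fire} then gives directly that the scheme is keyless untelegraphable, $\frac{\secpar}{8}$-clonable and conversable.

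\emph{Points to check.} Only two points need confirming, and neither is a real obstacle.
\begin{itemize}
\item Both theorems prove the properties for general (non-negligible) adversarial success probability, not only the simplified success-probability-$1$ case. This is covered by \cref{rem:nonPerfectAdvs}, which the linear proof invokes explicitly.
\item Exponential $h$-unforgeability implies ordinary unforgeability, since polynomial-time adversaries run in time $O(2^{h'\secpar})$ for any $0<h'<h$. Hence, under the stronger hypothesis, the same $OSS$ also yields the logarithmic variant, and both conclusions of the corollary hold simultaneously.
\end{itemize}
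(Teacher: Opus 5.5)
Your proposal is correct and matches the paper, which obtains the corollary directly by combining \cref{thm:ConversableShortVariant} (logarithmic variant under ordinary unforgeability) with \cref{thm:exponential-oss-imply-linear-fire} (linear variant under exponential $h$-unforgeability for $h>\frac18$). Your added observation is also correct: exponential $h$-unforgeability implies ordinary unforgeability, so both variants follow under the stronger hypothesis.
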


In both variants of our construction, the chain length is bounded by a function, either logarithmic in the security parameter or $\secpar/8$ (see line \ref{it:check_length} in $\ver$). A natural question is whether the variant in which this condition is relaxed or removed completely is also secure. We give a negative answer to this question by showing that security cannot hold generically. (Moreover, the result below shows that even a bound equal to the length of the verification key cannot work generically.) 

\begin{proposition}
    Let $OSS$ be a one-shot signature scheme. There exists a black-box construction (see~\cref{const:bad_oss}) of a scheme $OSS'$ from $OSS$ such that $OSS'$ is unforgeable if and only if $OSS$ is. 
    
    Furthermore, \cref{const:fireFromOneShot} instantiated with $c(\secpar)=\secpar$ is not keyless untelegraphable when constructed using $OSS'$.
    \label{prop:bad_oss}
\end{proposition}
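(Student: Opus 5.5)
The idea is to build $OSS'$ so that a polynomial-size classical string lets anyone regenerate a token for some verification key of a special "weak" form. The weakness must not help against unforgeability, because a forger is still stuck at a real $OSS$ key at the bottom. The natural way to do this is to let keys form a short deterministic ladder.

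\textbf{Construction of $OSS'$.} Verification keys of $OSS'$ are pairs $(b,x)$. A key $(0,pk)$ is an ordinary key: its token is an $OSS$ token for $pk$, and signing and verification are as in $OSS$. A key $(1,x)$ with $x\in\{0,1\}^{\secpar}$ is a "counter key". A signature on a message $m$ under $(1,x)$ is valid only if $m=(1,x+1)\|(1,x+1)$ (or a similarly fixed pair of descendant keys). At the end of the counter, $x=2^{\secpar}-1$ or some threshold, we instead require a genuine $OSS$ signature under an $OSS$ key embedded in $x$.

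More simply, I would let a counter key be $(1,pk,i)$ with $i\le \secpar$. Its unique valid "message" is the fixed pair $((1,pk,i+1),(1,pk,i+1))$, and its signature is empty. At $i=\secpar$ the key behaves as the ordinary key $pk$. $\oneshotgen'$ outputs only ordinary keys $(0,pk)$, so correctness is inherited.

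\textbf{Unforgeability, both directions.} If $OSS$ is forgeable, then $OSS'$ is forgeable by the same attack on keys $(0,pk)$. Conversely, take a forger for $OSS'$, i.e.\ two distinct messages that verify under the same key. The key cannot be a counter key with $i<\secpar$, since such a key has exactly one valid message. It therefore reduces to an $OSS$ key, and we obtain an $OSS$ forgery by a black-box reduction. This step is routine once the construction is fixed.

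\textbf{The attack on \cref{const:fireFromOneShot} with $c(\secpar)=\secpar$.} $\adv$ runs $(pk,\qsk)\gets OSS.\oneshotgen$ and outputs the serial number $s=(1,pk,0)$. It sends $t=(pk,\text{chain})$, where the chain is $\secpar$ links of the fixed pairs with empty signatures. That is classical data, but it still leaves $\bdv$ needing a token for $pk$, so this does not work yet.

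The fix is to make the end of the counter a key whose token is classical. Let $(1,pk,\secpar)$ accept any token and any signature, with no $OSS$ involvement: a "trivial key". This is still unforgeable, because a trivial key is reachable only from counter keys, and their messages are forced. A forger would have to exhibit two distinct messages under one key. Under a trivial key this is easy, though, so I must instead make trivial keys accept no signature at all but accept every token in $\oneshotvertoken'$.

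Then $\ver$ at depth $T=\secpar\le c(\secpar)$ sees a valid chain and a token that passes. Hence $\bdv$ outputs a valid flame from $t$ alone, and $\adv$ and $\bdv$ win with probability 1.

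\textbf{Main obstacle.} The main obstacle is consistency of this last design. "Unforgeable" is defined purely by signatures, so a key that signs nothing yet has trivially verifiable tokens does not contradict unforgeability, and this is precisely what the attack exploits. I would double-check that completeness (\cref{def:oneShotCorrectness}) only quantifies over honestly generated keys, so these unreachable keys are harmless. I would also check that the chain length $\secpar$ matches the bound $c(\secpar)=\secpar$. The counter only serves to make the remark about verification-key length meaningful, and it can be shortened to a single link if desired.
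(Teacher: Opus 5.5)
\textbf{Gap: the trivial key exploits a loophole the paper's framework excludes.} Your final design hinges on a key whose token test $\oneshotvertoken'$ accepts every state although no signature ever verifies under it. A purely syntactic reading of \cref{def:oneShotSignaturesSyntax} might allow this, but the paper rules it out. The remark on implementing $\oneshotvertoken$ makes the acceptance probability of a token equal to the probability that it validly signs a random message. The proof of \cref{thm:ConversableShortVariant} relies on exactly this in its internal-node and two-terminal tests: a token that passes $\oneshotvertoken$ must be able to sign a fresh message. Under that convention your trivial key has acceptance probability $0$. Letting it accept signatures instead makes it forgeable, which is the dilemma you ran into. The hack also proves too much. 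Take $s=(1,pk,\secpar)$ and an empty chain ($T=0$). Then $\bdv$ wins against \cref{const:fireFromOneShot} for every $c$, including the logarithmic variant, contradicting \cref{thm:ConversableShortVariant}. So the depth bound plays no role in your attack, whereas the point of the proposition is that a bound as large as the verification-key length is already insecure.

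\textbf{Missing idea: make the forced-message keys branch rather than form a ladder.} In \cref{const:bad_oss}, a key $\$x$ with $|x|<\secpar-1$ accepts, with any dummy signature, only the message $(\$x0,\$x1)$. For $|x|=\secpar-1$ it accepts only $(x0,x1)$, dropping the $\$$. These keys form a complete binary tree of depth $\secpar$ rooted at the fixed serial number $\$$. Its leaves are all $\secpar$-bit strings, i.e., all genuine verification keys. So $\adv$ sends $s=\$$ and nothing else. $\bdv$ runs $\oneshotgen$ itself and walks down the tree along the bits of its own $vk$, using exactly $\secpar\le c(\secpar)$ links. It then attaches its genuine token $\qsk$. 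Since $\oneshotgen$, $\oneshotsign$, $\oneshotvertoken$ are untouched and every $\$$-key accepts a unique message, unforgeability transfers in both directions. No token is ever accepted without signing power. Your first ladder failed because a non-branching path ends at a single key fixed by $\adv$, for which $\bdv$ has no token. Branching lets $\bdv$ pick the leaf matching a key it generated itself.
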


\begin{proof}
    We start by defining $OSS'$ as follows:

    \begin{construction}
    Let $OSS = (\oneshotgen,\oneshotsign,\oneshotversign,\oneshotvertoken)$ be a one-shot signature scheme where we assume without loss of generality that the verification keys are of length $\secpar$. We define $OSS'$ to be a one-shot signature scheme such that the algorithms $\oneshotgen$, $\oneshotsign$, $\oneshotvertoken$ are the same, the alphabet of the verification keys includes a new symbol, "\$", and the $\oneshotversign(vk,m,\sigma)$ algorithm is as follows:
    \begin{enumerate}
        \item If the verification key has the form $\$||vk$:
        \begin{enumerate}
            \item If $|vk| < \secpar-1$ and $m=(\$||vk || 0, \$||vk || 1)$, then accept.
            \item If $|vk| = \secpar-1$ and $m=(vk || 0, vk || 1)$, then accept.
            \item Else reject.
        \end{enumerate}
        \item Else output $OSS.\oneshotversign(vk,m,sig)$.
    \end{enumerate}
    \label{const:bad_oss}
\end{construction}
    
    We show that if $OSS$ is an unforgeable one-shot signature scheme, then $OSS'$ is also unforgeable. First, correctness holds since we did not change the $\oneshotgen$ algorithm. It still generates a token satisfying correctness with a verification key over the original alphabet (see \cref{def:oneShotCorrectness}); hence $OSS'.\oneshotversign$ acts as $OSS.\oneshotversign$ on such keys. 
    Next, unforgeability (see \cref{def:OneShotUnforgeability}) holds. Consider an adversary that forges two signed messages, $m_0\neq m_1$, associated with the same verification key. In the case where the verification key does not begin with "\$", $OSS'.\oneshotversign$ acts as $OSS.\oneshotversign$, and therefore the probability of the adversary producing such a forgery is negligible. Note that every $vk$ that begins with "\$" will only be accepted with a unique message as defined in \cref{const:bad_oss}. Therefore, finding two distinct messages that pass verification under such a key is impossible.
    Conversely, any forgery against $OSS$ is also a forgery against $OSS'$,
    since on verification keys over the original alphabet,
    $OSS'.\oneshotversign$ acts identically to $OSS.\oneshotversign$.

    With that in hand, we now prove that \cref{const:fireFromOneShot}, constructed from $OSS'$, is telegraphable. We define the following adversaries $\adv$ and $\bdv$ for the keyless untelegraphability game (see \cref{def:StrongTelegraphingSecurity}). Perhaps surprisingly, our adversaries do not send \emph{any} communication to each other. $\adv$ sends the serial number $\$$ to the challenger, and sends an empty string to $\bdv$. $\bdv$ runs $(vk,\qsk)\gets\oneshotgen(\secparam)$ and generates a classical register that starts with $\$0,\$1$. Let $vk_{:i}$ be the first $i$ bits of $vk$. For each bit of $vk$,
    with index $i$, $\mathcal B$ appends the following string to the end of
    the register: $\$vk_{:i}0,\$vk_{:i}1$.
    For every such link, $\mathcal B$ appends an arbitrary dummy signature
    $\sigma_\bot$, since $OSS'.\oneshotversign$ ignores the signature whenever
    the verification key begins with "$\$$".
    For the last bit of $vk$, the appended string does not begin with the
    "$\$$" symbol. $\bdv$ ends up with the classical register containing the string 
    \begin{equation*}
    (\$0,\$1,\sigma_\bot),
    (\$vk_{:1}0,\$vk_{:1}1,\sigma_\bot),
    \ldots,
    (vk_{:\secpar-1}0,vk_{:\secpar-1}1,\sigma_\bot).
    \end{equation*}
    $\bdv$ then sends a quantum state whose first classical register
    contains the string constructed above, whose second classical register is
    $vk$, and whose quantum register contains $\qsk$, the token generated by
    $\oneshotgen$. This string passes verification under \cref{const:fireFromOneShot} since $OSS'.\oneshotversign$ accepts the entire chain by the added verification rules. This means that these two adversaries win the keyless untelegraphability game with probability 1. 
\end{proof}

We now examine the min-entropy of \cref{const:fireFromOneShot}. We first show that any keyless untelegraphable quantum fire scheme has high min-entropy.

\begin{theorem}[Quantum Fire Min-Entropy] \label{clm:Min-entropyFire}
     Keyless untelegraphability of quantum fire (see \cref{def:StrongTelegraphingSecurity}) implies high min-entropy, as defined in \cref{def:minEntropyFire}.
\end{theorem}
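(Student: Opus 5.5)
We argue by contraposition: we assume the scheme does not have high min-entropy and build a pair of adversaries $\adv,\bdv$ that win the keyless untelegraphability game of \cref{def:StrongTelegraphingSecurity} with non-negligible probability.

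\textbf{Step 1: extract a heavy serial number.} If high min-entropy fails, \cref{def:minEntropyFire} gives a constant $c>0$ and a QPT adversary $\mathcal L$ such that, for infinitely many $\secpar$,
\[
\Hbot(\tildever(\mathcal L(\secparam)))<c\log\secpar .
\]
Unpacking \cref{def:nonbotMinEntropy}, for each such $\secpar$ there is a serial number $s^*=s^*_\secpar\neq\bot$ satisfying
\[
\Pr[\mathcal L(\secparam)=(s^*,\quant{f})\ \wedge\ \ver(s^*,\quant{f})=\top]>\secpar^{-c}.
\]
The randomness here is over $\mathcal L$ and over $\ver$.

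\textbf{Step 2: the adversaries.} $\adv$ runs $(s,\quant{f})\gets\mathcal L(\secparam)$, sends $s$ to the challenger, and sends the empty string $t$ to $\bdv$. $\bdv$ ignores $t$, runs its own independent copy $(s',\quant{f}')\gets\mathcal L(\secparam)$, and returns $\quant{f}'$. Both parties are QPT and never share entanglement, so the pair is admissible.

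\textbf{Step 3: bound the winning probability.} The challenger accepts if $\ver(s,\quant{f}')=\top$, which holds whenever the following event occurs:
\[
s=s^*,\quad s'=s^*,\quad \ver(s^*,\quant{f}')=\top .
\]
The two runs of $\mathcal L$ are independent. The event $s=s^*$ has probability at least $\secpar^{-c}$, because it contains the event in Step 1. The event $\{s'=s^*\wedge\ver(s^*,\quant{f}')=\top\}$ has probability exceeding $\secpar^{-c}$. The winning probability is therefore at least $\secpar^{-2c}$ for infinitely many $\secpar$. This is non-negligible and contradicts keyless untelegraphability.

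\textbf{Main obstacle: non-uniformity.} $\adv$ does not need to know $s^*$; it simply forwards whatever $\mathcal L$ produced, and the product bound still goes through. The one subtle point is that $\ver$ must be applied only once in the real game, namely to $\bdv$'s flame. We never require $\adv$'s flame to verify, and the first factor is bounded by a probability about the serial number alone. This avoids any issue with measuring or disturbing $\quant{f}$.
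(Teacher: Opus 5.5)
Your proposal is correct and follows the paper's proof: $\adv$ and $\bdv$ each run an independent copy of $\mathcal{L}$ without communicating, and the winning probability is bounded below by roughly the square of the probability of the heaviest serial number $s^*$. Your bookkeeping is slightly more careful than the paper's. You lower-bound $\Pr[s=s^*]$ by the probability that $\mathcal{L}$ outputs $s^*$ together with a verifying flame, and you require verification only of $\bdv$'s flame, whereas the paper informally equates winning with both runs producing the same serial number.
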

\begin{proof} 
    Assume towards a contradiction that there exists an adversary $\mathcal{L}$ that outputs $(s',\quant{f}')$ whose distribution has low min-entropy (i.e., violates the guarantee in \cref{def:minEntropyFire}). We define the following two adversaries to quantum fire's keyless untelegraphability game: $\adv$ runs $(s',\quant{f}')\gets\mathcal{L}$ and sends $s'$ to the challenger. $\bdv$ runs $(s'',\quant{f}'')\gets\mathcal{L}$, and sends $\quant{f}''$ to the challenger. Let $s^*$ be the most probable serial number generated by $\mathcal{L}$, which occurs with probability $p^*$, where $p^* = 2^{-\Hbot(\widetilde\ver(\mathcal{L}(\secparam)))}$. Because we assume $\mathcal{L}$ generates serial numbers and flames with low min-entropy, we can say that $p^*$ is non-negligible. The probability that $\adv$ and $\bdv$ win the untelegraphability game is at least the probability that both obtain the same serial number from $\mathcal{L}$, which is at least $(p^*)^2$, and hence non-negligible. 
\end{proof}

Note that no communication between the adversaries is needed, even though $\adv$ is allowed to send a classical message to $\bdv$.

Conversable quantum fire is quantum fire with an additional correctness notion of conversability and thus also has the same notion of min-entropy. We now show that the same exponential-unforgeability assumption used for \cref{const:fireFromOneShot}(b) also yields min-entropy linear in the security parameter.

\begin{theorem} 
    If $OSS$ is an exponentially $\frac{1}{3}$-unforgeable one-shot signature scheme (see \cref{def:OneShotUnforgeability}), then \cref{const:fireFromOneShot} with $c(\secpar)=\frac{1}{8}\secpar$ has min-entropy at least $\frac{1}{8}\secpar$ (see \cref{def:minEntropyFire}).
    \label{thm:min-entropy-linear-fire}
\end{theorem}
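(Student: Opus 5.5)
We argue by contradiction, combining the reduction in the proof of \cref{clm:Min-entropyFire} with the forger of \cref{thm:ConversableShortVariant,thm:exponential-oss-imply-linear-fire}. Suppose some QPT $\mathcal{L}$ violates the bound, i.e., there is a serial number $s^*$ with
\[
p^* := \Pr\bigl[\tildever(\mathcal{L}(\secparam)) = s^*\bigr] > 2^{-\secpar/8}.
\]
The polynomial-time argument of \cref{clm:Min-entropyFire} no longer applies, since $(p^*)^2$ may be exponentially small. We therefore build an exponential-time OSS forger that samples many flames from $\mathcal{L}$ and applies the pigeonhole argument directly.

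\textbf{The forger.} $\mathcal{F}$ runs $\mathcal{L}(\secparam)$ independently $M := \lceil 2^{\secpar/8}\cdot 2^{\secpar/8}\cdot \secpar\rceil = \widetilde{O}(2^{\secpar/4})$ times. No advice is needed: $\mathcal{F}$ does not have to know $s^*$. It groups the outputs by claimed serial number and, within each group, treats the group's serial number as the root. It then runs the sorted-record search from the proof of \cref{thm:exponential-oss-imply-linear-fire}, applying the inconsistency test, the internal node test, and the two terminals test across all outputs. Records may be global, since an inconsistency under any key is already a forgery. The forger's outputs are not filtered by $\ver$. As noted in \cref{rem:nonPerfectAdvs}, the analysis restricts attention to the valid ones, and the three tests check signatures themselves.

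\textbf{Counting and pigeonhole.} The number of valid flames with serial number $s^*$ is a sum of $M$ independent indicators with mean $Mp^* > 2^{\secpar/8}\secpar$. By a Chernoff bound, except with probability $2^{-\Omega(2^{\secpar/8}\secpar)}$ there are more than $2^{\secpar/8}+1$ such flames. Each has a chain of length at most $\secpar/8$ rooted at $s^*$, by line~\ref{it:check_length} of $\ver$. The graph argument from \cref{thm:ConversableShortVariant} applies: out-degree is at most $2$ unless the inconsistency test fires, and depth is at most $\secpar/8$, so there are at most $2^{\secpar/8}$ terminals. Hence either two valid tokens share a terminal key, or a valid token sits at an internal node, and in both cases we get a forgery. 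As in that proof, the forgery is obtained by signing with live tokens that pass $\oneshotvertoken$, which succeeds with non-negligible (in fact overwhelming after amplification) probability.

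\textbf{Running time and the main obstacle.} The total work is $M\cdot\poly(\secpar)$ for sampling, plus sorting $O(M\secpar)$ records. This gives $\widetilde{O}(2^{\secpar/4}) = O(2^{h'\secpar})$ for any $h'\in(1/4,1/3)$, which exponential $\frac13$-unforgeability rules out. The main obstacle is balancing the parameters. We need $Mp^*$ to exceed the terminal bound $2^{\secpar/8}$ while $p^*$ may be as small as about $2^{-\secpar/8}$, so $M$ must be about $2^{\secpar/4}$, and this must remain below $2^{\secpar/3}$. The margin is what $h=\frac13$ buys. A secondary subtlety is that $\Hbot$ allows $p^*$ to sit only barely above $2^{-\secpar/8}$. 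The extra factor $\secpar$ in $M$ absorbs this and ensures the Chernoff bound gives strictly more than $2^{\secpar/8}$ valid flames, without altering the exponent.
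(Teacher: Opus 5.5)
Your proposal is correct and follows essentially the same route as the paper. You run $\mathcal{L}$ roughly $2^{\lambda/4}$ times, use a Chernoff bound to obtain more than $2^{\lambda/8}$ valid chains rooted at the most likely serial number, and apply the out-degree-$2$, depth-$\lambda/8$ pigeonhole argument from \cref{thm:ConversableShortVariant} to get an OSS forger running in time $\widetilde{O}(2^{\lambda/4})$, which exponential $\frac13$-unforgeability rules out. The differences are cosmetic: your extra factor of $\lambda$ in the number of runs, compared with the paper's $2^{\lambda/4+2}$, gives overwhelming rather than constant success probability, and you use a global sorted-record search instead of per-serial-number graphs $G(s)$.
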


\begin{proof}
    Assume towards a contradiction that some adversary $\mathcal{L}$ satisfies $\Hbot(\tildever(\mathcal{L}(\secparam)))$ $< \frac{1}{8}\secpar$. We define the forger $\mathcal{F}$ to be the following algorithm: 
    $\mathcal{F}$ runs $\mathcal{L}$ $2^{\frac{1}{4}\cdot\secpar + 2}$ times. For every serial number $s$ that $\mathcal{L}$ generated, define the graph $G(s)$ to be the directed graph defined in the proof of \cref{thm:ConversableShortVariant} from the branches generated by $\mathcal{L}$ that start with $s$ and pass verification. If any of these graphs has an inconsistency, then $\mathcal{F}$ finds a forgery. Otherwise, let $s^*$ be the most common serial number $\mathcal{L}$ generates. The probability that $\mathcal{L}$ outputs $s^*$ is $p^* > 2^{-\frac{1}{8}\cdot\secpar}$. The expected number of branches that are part of $G(s^*)$ is then at least $2^{\frac{1}{4}\cdot\secpar + 2} \cdot 2^{-\frac{1}{8}\cdot\secpar} = 2^{\frac{1}{8}\cdot\secpar + 2}$. By the Chernoff--Hoeffding inequality, with probability at least $\frac{1}{2}$, there are at least $2^{\frac{1}{8}\secpar +1}$ chains starting with $s^*$. 

    As in the proof of \cref{thm:ConversableShortVariant}, since
    $G(s^*)$ contains no inconsistency, every vertex has an out-degree of at most $2$. Moreover, every vertex is at a distance of at most $c(\secpar)=\frac18\secpar$ from $s^*$. Therefore, by the BFS-tree argument from that proof, $G(s^*)$ has at most $2^{\frac18\secpar}$ terminal vertices.

    Now consider the at least $2^{\frac{1}{8}\secpar+1}$ chains generated starting at $s^*$. If one of these chains terminates at an internal
    vertex of $G(s^*)$, then, exactly as in the internal-node case of \cref{thm:ConversableShortVariant}, the forger obtains an OSS
    forgery. Otherwise, all of these chains terminate at terminal vertices. Since there are at most $2^{\frac18\secpar}$ such vertices,
    by the pigeonhole principle, two of the chains terminate at the same verification key. As in the two-terminal case of \cref{thm:ConversableShortVariant}, this yields an OSS forgery.
    
    This means that we have constructed a forger with a constant success probability, with runtime $O(2^{\frac{1}{4}\cdot\secpar + 2}\cdot\poly)$, which contradicts exponential unforgeability.
\end{proof}

Note that a more general result can be achieved by taking $c(\secpar)=\ell\cdot\secpar$ for some fixed $\ell \in (0,h)$, although there is a trade-off between clonability and min-entropy: if the min-entropy is $e\cdot\secpar$ for some fixed constant $e$, then $\ell+e < h$.

We now lay the groundwork for a classically certifiable min-entropy scheme (see \cref{const:classicalPCME}). We start by defining a variant of conversable fire, with a classical receiver in the $\conv$ protocol. 

\begin{definition} [Decomposable Fire]
    A conversable quantum fire scheme $CQFIRE$ is \emph{decomposable} if the flame can be decomposed into three registers, two classical and one quantum, and there exist three PPT algorithms 
    $\ifprepare(\secparam)$, $\ifrecon(r,s,d)$, and $\ifver(s,ch,vk)$, where:
    \begin{itemize}
        \item $\ifprepare$ takes the security parameter and outputs a verification key (classical string),
        \item $\ifrecon$ takes a serial number $s$ and a deconstructed fire string $d$ and outputs the classical registers $ch$ and $vk$, or $\bot$, and
        \item $\ifver$ takes a serial number, a chain string, and a verification key and outputs $\top$ or $\bot$,
    \end{itemize}
    that satisfy, for every (computationally unbounded) adversary $\adv$, the following:
        \begin{align*}
        \Pr&\left[\ifver(s, ch,vk) = \top:\bigstack{r \gets \ifprepare(\secparam) \\ s,d \gets \adv(r) \\ ch,vk \gets \ifrecon(r,s,d)}\right] \\
        =&\Pr\left[\ver(s,ch,vk, \quant g')= \top:\bigstack{r,\quant g \gets \prepare(\secparam) \\ s,d \gets \adv(r) \\ ch,vk,\quant g' \gets \recon(r,s,d,\quant g)}\right].    
        \end{align*}

    We also define a variant of the verification algorithm, $\widetilde{\ifver}(s, ch,vk,r)$, that, when $\ifver$ accepts, outputs the serial number $s$ and, when it rejects, outputs $\bot$.  
    \label{def:DecomposableFire}
\end{definition}

As with the other primitives we have discussed, we define a notion of min-entropy to accompany decomposable fire.

The next step in laying the groundwork is to show that the min-entropy is preserved when moving from a conversable quantum fire to a decomposable fire.

\begin{lemma}
    If $CQFIRE$ is a decomposable fire (see \cref{def:DecomposableFire}) and has $e(\secpar)$ min-entropy, then for every QPT adversary $\mathcal{CL}$:
    \begin{equation}
        \Hbot\left(
            \widetilde \ifver(s, ch,vk) = \top:\bigstack{r \gets \ifprepare(\secparam) \\ s,d \gets \mathcal{CL}(r) \\ ch,vk \gets \ifrecon(r,s,d)}
        \right) \geq e(\secpar)
        \label{eq:decomposableDefinition}
        \end{equation}
    \label{lemma:decomposableFirePreservesMinEntropy}
\end{lemma}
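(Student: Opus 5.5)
The plan is a direct reduction. From any QPT adversary $\mathcal{CL}$ against the classical experiment in \eqref{eq:decomposableDefinition}, I build a QPT adversary $\mathcal{L}$ against the min-entropy of $CQFIRE$ (\cref{def:minEntropyFire}). The two experiments should induce the same distribution on non-$\bot$ outputs, so the min-entropy bound transfers.

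The construction of $\mathcal{L}(\secparam)$ simulates one run of the $\conv$ protocol in which $\mathcal{CL}$ plays the sender:
\begin{enumerate}
\item $\mathcal{L}$ runs $(r,\quant g)\gets\prepare(\secparam)$.
\item It runs $(s,d)\gets\mathcal{CL}(r)$.
\item It runs $(ch,vk,\quant g')\gets\recon(r,s,d,\quant g)$ and outputs the serial number $s$ together with the flame $(ch,vk,\quant g')$. If $\recon$ outputs $\bot$, $\mathcal{L}$ outputs some fixed rejected pair.
\end{enumerate}
This is QPT because $\prepare$, $\recon$ and $\mathcal{CL}$ are.

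By \cref{def:minEntropyFire}, $\Hbot(\tildever(\mathcal{L}(\secparam)))\geq e(\secpar)$. Equivalently, for every serial number $s^*$,
\[\Pr[\ver(s,ch,vk,\quant g')=\top\wedge s=s^*]\leq 2^{-e(\secpar)}.\]
It therefore suffices to show that for every fixed $s^*$ this probability equals
\[\Pr[\ifver(s,ch,vk)=\top\wedge s=s^*]\]
in the classical experiment.

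The decomposability condition in \cref{def:DecomposableFire} only equates the total acceptance probabilities, not the probabilities per serial number. To refine it, I apply the condition to a modified (possibly unbounded) adversary $\adv_{s^*}$. This adversary runs $\mathcal{CL}(r)$; if the resulting serial number is $s^*$ it outputs $(s,d)$, and otherwise it outputs $(s,\bot)$. I assume that $\ifrecon$ and $\recon$ reject a malformed $d$; if needed, any $d$ that is guaranteed to be rejected in both worlds will do. Both $\ifrecon$ and $\recon$ see the same distribution of $(r,s,d)$, because $\ifprepare$ and the classical part of $\prepare$ generate $r$ in the same way; this holds for the OSS keys under the classically-generable-key assumption. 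Applying the equality to $\adv_{s^*}$ then gives exactly the per-$s^*$ equality. Taking the maximum over $s^*$ yields \eqref{eq:decomposableDefinition}.

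The main obstacle is this refinement step. The definition quantifies over all adversaries but only matches aggregate acceptance probabilities, so I need to make sure the filtering adversary is legitimate. In particular, $\bot$ (or some other rejected output) must be rejected identically in both worlds. I also need to check that $r$ has the same marginal distribution in the two experiments, since $\adv_{s^*}$ receives $r$. If the definition is read as implicitly requiring the matched distributions to agree jointly with $s$, this step is immediate and the lemma reduces entirely to the simulation above.
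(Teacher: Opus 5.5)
Your proposal is correct and is essentially the paper's proof. The paper also takes $\mathcal{L}$ to be the composition of $\prepare$, $\mathcal{CL}$ and $\recon$, and applies the decomposability condition to the filtering adversary that forwards $\mathcal{CL}$'s output only when $s=s^*$ (and outputs $\bot$ otherwise), concluding that for every $s^*$ the classical and quantum experiments accept with serial number $s^*$ with equal probability.

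Your extra check that $r$ has the same marginal in both worlds is unnecessary. The decomposability equality is asserted for each fixed adversary, with $r\gets\ifprepare(\secparam)$ on one side and $r\gets\prepare(\secparam)$ on the other, so it applies to $\adv_{s^*}$ directly. The only convention really needed, which the paper uses silently, is that a $\bot$ output is rejected in both experiments.
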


\begin{proof}
    Fix an adversary $\mathcal{CL}$. We define the following two distributions, for all $s\in\{0,1\}^* \cup \{\bot\}$:
    \begin{equation*}
        D_1(s) := \Pr\left[\widetilde \ifver(s, ch,vk) = s :\bigstack{r \gets \ifprepare(\secparam) \\ s,d \gets \mathcal{CL}(r) \\ ch,vk \gets \ifrecon(r,s,d)}\right]
    \end{equation*}
    \begin{equation*}
        D_2(s) := \Pr\left[\widetilde\ver(s, ch,vk,\quant{g}') = s:\bigstack{r,\quant{g} \gets \prepare(\secparam) \\ s,d \gets \mathcal{CL}(r) \\ ch,vk,\quant{g}' \gets \recon(r,s,d,\quant{g})}\right]
    \end{equation*}
    We argue that $D_1=D_2$, i.e., $\forall s \in\{0,1\}^* \cup \{\bot\} :D_1(s)=D_2(s)$. Assume towards a contradiction that there exists $s^*\in\{0,1\}^*$ such that $D_1(s^*) \neq D_2(s^*)$. We define the following adversary $\adv'(r)$ relative to \cref{def:DecomposableFire}. 
    $\adv'(r)$ runs $s,d \gets \mathcal{CL}(r)$; if $s=s^*$, it outputs $(s^*,d)$; otherwise, it outputs $\bot$. If we examine Eq.~\ref{eq:decomposableDefinition}, the left-hand side with our $\adv'$ is equal to $D_1(s^*)$, and the right-hand side is equal to $D_2(s^*)$. We have reached a contradiction. Since the two distributions are equal, their non-$\bot$ min-entropies are equal as well:
    \begin{align*}
        &\Hbot\left(
            \widetilde \ifver(s, ch,vk) = s :\bigstack{r \gets \ifprepare(\secparam) \\ s,d \gets \mathcal{CL}(r) \\ ch,vk \gets \ifrecon(r,s,d)}
        \right) \\
        &=\Hbot\left(
            \widetilde\ver(s, ch,vk,\quant{g}') = s:\bigstack{r,\quant{g} \gets \prepare(\secparam) \\ s,d \gets \mathcal{CL}(r) \\ ch,vk,\quant{g}' \gets \recon(r,s,d,\quant{g})}
        \right) \geq e(\secpar),
    \end{align*}
    where the final inequality comes from \cref{def:minEntropyFire}, taking the adversary $\mathcal{L}$ to be the composition of $\prepare$, $\mathcal{CL}$, and $\recon$ appearing in the equation above.
\end{proof} 

\begin{theorem}
    For every OSS that satisfies completeness (\cref{def:oneShotCorrectness}), if there exists a classical algorithm $\oneshotcgen$ such that $\oneshotcgen(\secparam)=\oneshotgen(\secparam)_{vk}$, then the fire in \cref{const:fireFromOneShot} is decomposable.
    \label{thm:ossToDecomposable}
\end{theorem}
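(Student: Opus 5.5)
We will prove the statement by writing down the three classical algorithms explicitly and then checking the required equality of acceptance probabilities against every (possibly unbounded) adversary $\adv$.

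\textbf{The three algorithms.}
\begin{itemize}
\item $\ifprepare(\secparam)$ runs $\oneshotcgen(\secparam)$ and outputs the resulting verification key $r$. By hypothesis, this key has exactly the distribution of the key component of $\oneshotgen(\secparam)$.
\item $\ifrecon(r,s,d)$ mirrors $\recon$ on its classical part. It parses $d$ as a chain $ch'$ and lets $vk'$ be the second key in the last link. If $vk'\neq r$ it outputs $\bot$; otherwise it outputs $(ch',r)$.
\item $\ifver(s,ch,vk)$ runs every step of $\ver$ except the token check in line~\ref{it:vertoken}. It checks the length bound $T\le c(\secpar)$ (line~\ref{it:check_length}), computes all link verifications $a_t$ with $\oneshotversign$, checks that $vk$ is one of the two keys in the last link, and outputs the conjunction of these checks.
\end{itemize}
All of these steps are classical and polynomial time, so $\ifprepare$, $\ifrecon$ and $\ifver$ are PPT.

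\textbf{The coupling.} The key step is to couple the two experiments of \cref{def:DecomposableFire}. We sample $(r,\quant g)\gets\oneshotgen(\secparam)$ in the quantum experiment. Its marginal on $r$ equals the output distribution of $\ifprepare$, and $\adv$ sees only $r$. Hence the joint distribution of $(r,s,d)$ is identical in the two experiments.

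Conditioned on $(r,s,d)$, the two procedures behave the same way:
\begin{itemize}
\item $\recon$ and $\ifrecon$ reject on exactly the same inputs. When they accept, they output the same classical registers $(ch',r)$, and $\recon$ additionally outputs the untouched token $\quant g$.
\item $\ver(s,ch',r,\quant g)$ computes exactly the conjunction computed by $\ifver(s,ch',r)$, ANDed with the bit $a$ from $\oneshotvertoken(r,\quant g)$.
\end{itemize}

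\textbf{The token check always passes.} It remains to show that this extra bit $a$ equals $\top$ with probability $1$ for every adversary. This is where completeness (\cref{def:oneShotCorrectness}) enters: $\Pr[\oneshotvertoken(\oneshotgen(\secparam))=\top]=1$. The token $\quant g$ is still exactly the honestly generated token paired with $r$. $\adv$ never touches it and only learns the classical $r$, and the operations of $\recon$ do not act on it. Therefore the conditional distribution of $(r,\quant g)$ given $\adv$'s view and output is the honest key-generation distribution conditioned on $r$.

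Completeness is stated over the joint distribution, so it holds for almost every $r$. I expect this conditioning step to be the main subtle point. We have to argue that a probability-$1$ event remains probability $1$ after conditioning on any $(r,s,d)$ that has positive probability. This holds because conditioning on a positive-probability event cannot create a failure of a probability-one event. Since $d$ and $s$ are independent of $\quant g$ given $r$, the conditioning effectively reduces to conditioning on $r$, where completeness applies.

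\textbf{Conclusion.} Given the above, the acceptance probability of $\ver$ equals the acceptance probability of $\ifver$ for every $(r,s,d)$. Averaging over $(r,s,d)$ gives the equality of probabilities required by \cref{def:DecomposableFire}, so the fire in \cref{const:fireFromOneShot} is decomposable.
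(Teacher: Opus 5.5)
Your proposal is correct and follows essentially the same route as the paper. You use the same three algorithms: $\ifprepare := \oneshotcgen$, $\ifrecon$ as the classical part of $\recon$ with the check against $r$, and $\ifver$ as $\ver$ without line~\ref{it:vertoken}. You then observe that $\ver$ accepts exactly when $\ifver$ accepts and $\oneshotvertoken(r,\quant g)$ passes, and the latter happens with probability $1$ by \cref{def:oneShotCorrectness}. Your coupling and conditioning argument, that $(s,d)$ is independent of $\quant g$ given $r$, just makes rigorous the final equality in the paper's chain, which the paper justifies by a one-line appeal to completeness.
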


\begin{proof}
    We want to prove that the decomposability conditions hold. We observe that the fire states generated in \cref{const:fireFromOneShot} can indeed be partitioned into three registers in the required format. We can now define the three algorithms: 
    \begin{enumerate}
        \item $\ifprepare(\secparam):= \oneshotcgen(\secparam)$
        \item We define $\ifver(s, ch, vk)$ to execute the first four steps of \cref{const:fireFromOneShot}'s $\ver$ algorithm and output the conjunction of all the $a_t$ values. We define $\ifqver(ch,\quant f)$ to be line \ref{it:vertoken} of $\ver$.  
        \item We define $\ifrecon(r,s,d)$ to act like $\recon$, while disregarding
        the quantum state. More precisely, it interprets $d$ as $ch'$ and lets
        $vk$ be the second verification key in the last link of $ch'$. It rejects
        if $vk\neq r$. Otherwise, it outputs $(ch',vk)$.
    \end{enumerate}

    We now show that the condition holds:
        
    \begin{align*}
        &\Pr_{\substack{r,\quant g \gets \prepare(\secparam) \\ s,d \gets \adv(r) \\ ch, vk, \quant g \gets \recon(s,d,r,\quant g)}}\left[\ver(s,(ch, vk,\quant g)) = \top \right]  \\
        \stackrel{(*)}{=}&\Pr_{\substack{r,\quant g \gets \oneshotgen(\secparam) \\ s,d \gets \adv(r) \\ ch, vk, \quant g \gets \recon(s,d,r,\quant g)}}\left[\ifver(s,ch,vk) \wedge \oneshotvertoken(vk,\quant{g}) = \top \right]  \\
        \stackrel{(**)}{=}&\Pr_{\substack{r,\quant g \gets \oneshotgen(\secparam) \\ s,d \gets \adv(r) \\ ch, vk, \quant g \gets \recon(s,d,r,\quant g)}}\left[\ifver(s,ch,vk) \wedge vk = r \wedge \oneshotvertoken(r,\quant{g}) = \top \right] \\
        =&\Pr_{\substack{r,\quant g \gets \oneshotgen(\secparam) \\ s,d \gets \adv(r) \\ ch, vk, \quant g \gets \recon(s,d,r,\quant g)}}\left[\ifver(s,ch,vk) \wedge \oneshotvertoken(r,\quant{g}) = \top \right] \\
        \stackrel{\text{\cref{def:oneShotCorrectness}}}{=}& \Pr_{\substack{r \gets \ifprepare(\secparam) \\ s,d \gets \adv(r) \\ ch,vk \gets \ifrecon(r,s,d)}}\left[\ifver(s,ch,vk) = \top \right] \\
        &\text{(*) follows by the definition of $\ver$.} \\
    &\text{(**) $\ifver$ checks that $vk = r$.}
    \end{align*}
    Since the condition is satisfied by our construction of $\ifprepare,\ifrecon$ and $\ifver$, there exist such algorithms as required by \cref{def:DecomposableFire}, and therefore \cref{const:fireFromOneShot} is indeed decomposable.
\end{proof}

The existence of such a classical algorithm $\oneshotcgen$ for the construction
of~\cite{SZ25} follows implicitly from~\cite[Proposition~29]{SZ25} in their
ePrint version.

Combining this fact with \cref{thm:ConversableShortVariant,thm:ossToDecomposable}
and the standard-model one-shot signature construction of~\cite{SZ25}, we obtain:

\begin{corollary}
    Under subexponentially secure iO,
    subexponentially secure OWF, and polynomially secure LWE with
    a subexponential noise-modulus ratio, \cref{const:fireFromOneShot}(a) instantiated with
    the one-shot signature scheme of~\cite{SZ25} is a keyless untelegraphable,
    conversable, decomposable quantum fire scheme.
\end{corollary}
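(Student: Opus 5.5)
The corollary is obtained by chaining three results: the standard-model one-shot signature construction of~\cite{SZ25}, \cref{thm:ConversableShortVariant}, and \cref{thm:ossToDecomposable}.

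\textbf{Step 1: instantiating the OSS.} Let $OSS$ be the standard-model one-shot signature scheme of~\cite{SZ25}. Under subexponentially secure iO, subexponentially secure one-way functions, and polynomially secure LWE with a subexponential noise-modulus ratio, $OSS$ is unforgeable in the sense of \cref{def:OneShotUnforgeability}. It also satisfies the completeness requirement of \cref{def:oneShotCorrectness}.

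Two details need checking here.
\begin{itemize}
\item If the~\cite{SZ25} scheme has completeness only with overwhelming rather than perfect probability, the perfect-correctness clauses (correctness, clonability, conversability) weaken to overwhelming probability. I would either note this relaxation or invoke the perfect-correctness variant of~\cite{HSVZ26}.
\item The~\cite{SZ25} scheme uses a CRS. Following the convention stated after \cref{def:oneShotSignaturesSyntax}, every algorithm of \cref{const:fireFromOneShot} then takes the CRS as input. None of the arguments below is affected, since the forger in each reduction simply receives the CRS and passes it to $\adv$ and $\bdv$.
\end{itemize}

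\textbf{Step 2: keyless untelegraphable, conversable fire.} Apply \cref{thm:ConversableShortVariant} with any fixed constant $c$. Because $OSS$ is unforgeable, \cref{const:fireFromOneShot}(a) is a keyless untelegraphable, conversable, $c\log(\secpar)$-clonable quantum fire scheme. Only polynomial unforgeability is needed, because the forger in that proof runs $\bdv$ only $\lceil\secpar^c\rceil+1$ times (with the polynomial amplification of \cref{rem:nonPerfectAdvs}).

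\textbf{Step 3: decomposability.} The hypothesis of \cref{thm:ossToDecomposable} is a classical algorithm $\oneshotcgen$ whose output has exactly the distribution of the verification-key component of $\oneshotgen(\secparam)$. For the~\cite{SZ25} construction, this follows from~\cite[Proposition~29]{SZ25} in their ePrint version, as noted just before the corollary. Concretely, the verification key is determined by classically samplable data (e.g., the obfuscated program and the hash image), and its distribution does not depend on whether the quantum token is ever prepared. Given this, \cref{thm:ossToDecomposable} shows that \cref{const:fireFromOneShot} is decomposable. Combining Steps 2 and 3 gives the corollary.

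\textbf{Main obstacle.} The only non-formal point is Step 3. \cref{def:DecomposableFire} requires exact equality of acceptance probabilities against unbounded adversaries, so we need $\oneshotcgen$ to reproduce the verification-key distribution \emph{exactly}, not just up to computational indistinguishability. I would check that Proposition~29 gives exact equality. If it only gives statistical closeness, the equality in \cref{def:DecomposableFire} would hold up to a negligible term. That would suffice for all downstream uses, such as \cref{lemma:decomposableFirePreservesMinEntropy}, though the definition would need to be relaxed accordingly.
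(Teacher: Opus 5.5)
Your proposal is correct and follows the paper's own argument. The paper instantiates \cref{const:fireFromOneShot}(a) with the standard-model OSS of~\cite{SZ25}, invokes \cref{thm:ConversableShortVariant} for keyless untelegraphability and conversability, and invokes \cref{thm:ossToDecomposable} with the classical key sampler implied by~\cite[Proposition~29]{SZ25} for decomposability.

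Your caveats about perfect completeness, the CRS, and exact (rather than statistical) equality of the verification-key distribution are points the paper passes over silently. One correction: falling back on~\cite{HSVZ26} would not work here, because the paper describes it as an oracle-model construction, which takes you outside the corollary's standard-model assumptions.
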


\section{Publicly Certifiable Min-Entropy over Classical Communication from Conversable Quantum Fire}
\label{sec:PubliclyCertifiableMinEntropyFromconversable}

\begin{construction} \label{const:TPCME}
    Given a conversable quantum fire scheme $CQFIRE = ($$\spark$, $\clone$, $\ver$, $\langle\prepare,\decon,\recon\rangle)$, we define the following construction:

    \begin{itemize}
        \item $\certgenent(\secparam)$:
        \begin{enumerate}
            \item Output $\spark(\secparam)$.
        \end{enumerate}
        \item $\certclonewitness(s, \quant{w})$:
        \begin{enumerate}
            \item Output $\clone(s, \quant{w})$.
        \end{enumerate}
        \item $\certver\left\langle\certprover(\secparam,s,\quant{w}),\certcertifer(\secparam)\right\rangle$:
        \begin{enumerate}
            \item The $\certcertifer$ runs $(t,\quant{w}') \gets \prepare(\secparam)$, and sends $t$ to $\certprover$.
            \item The prover runs $d,vk \gets \decon(\secparam,s,\quant{w},t)$ and sends $s$, $d$ and $vk$ to $\certcertifer$.
            \item $\certcertifer$ runs $\quant{w}''\gets\recon(s,d,vk,\quant{w}')$
            \item $\certcertifer$ runs $b \gets \ver(s,\quant{w}'')$
            \item Output $b, s, \quant{w}''$
        \end{enumerate}
    \end{itemize}
\end{construction}

\begin{theorem} 
\label{thm:ConversableFireToPublicallyCertifiableSecurity}
    If $CQFIRE$ is a $c(\secpar)$-clonable conversable quantum fire with high
    min-entropy (see \cref{def:minEntropyFire}), then \cref{const:TPCME} is a publicly certifiable,
    $c(\secpar)$-transferable min-entropy scheme (see~\cref{def:publicCertifiableMin-Entropy}).
\end{theorem}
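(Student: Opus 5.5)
The plan is to check the required properties of \cref{const:TPCME} one at a time: the syntax of \cref{def:publicCertifiableSyntax}, $c(\secpar)$-clonability and correctness, $c(\secpar)$-transferability (\cref{def:C-Transferability}), and high min-entropy (\cref{def:publicCertifiableMin-Entropy}). Each property should reduce directly to the matching property of $CQFIRE$.

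\textbf{Syntax and correctness.} The syntax is immediate. $\certgenent$ outputs a serial number together with a flame, and $\certclonewitness$ is $\clone$. $\certver$ exchanges only the classical strings $t$, $s$, $d$, $vk$, and the verifier ends holding a bit, the string $s$, and the quantum state $\quant{w}''$.

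For $c(\secpar)$-clonability and public-certifiability correctness, we view any honest sequence of at most $c(\secpar)$ calls to $\certclonewitness$ and runs of $\certver$ as a sequence of $\clone$ operations and $\conv$ executions starting from $\spark$. The conversability condition of \cref{def:Conversability} covers exactly such sequences. It states that the flame reconstructed by $\recon$ passes $\ver$ with probability $1$ under the original serial number $s$. For the prover side, the cloning-correctness of \cref{def:CloningCorrectnessAlt} applies. If the prover must still hold a valid witness after $\decon$, I would use the fact that in \cref{const:fireFromOneShot} $\decon$ is essentially a $\clone$ that keeps one branch. For a generic scheme I would state this as the intended reading of the definitions.

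\textbf{Transferability.} This is an induction on the number of runs. After run $i$, the verifier's output $(s,\quant{w}'')$ is a flame obtained from $\spark$ by $i$ conversations. As long as $i\le c(\secpar)$, conversability guarantees that it verifies with probability $1$. So it is a valid prover input for run $i+1$, and $\widetilde{\certver}$ never outputs $\bot$.

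\textbf{Min-entropy.} This step carries the content, and it is where I expect the main subtlety. Fix a QPT cheating prover $\mathcal P$. Define a QPT algorithm $\mathcal L$ that simulates the whole interaction $\langle\mathcal P,V\rangle$ internally:
\begin{enumerate}
    \item run $\prepare$ to obtain $(t,\quant{w}')$;
    \item feed $t$ to $\mathcal P$ and receive $(s,d,vk)$;
    \item run $\recon$ to obtain $\quant{w}''$;
    \item output $(s,\quant{w}'')$.
\end{enumerate}
The honest verifier's last step is $b\gets\ver(s,\quant{w}'')$. Therefore $\widetilde{\certver}\langle\mathcal P,V\rangle$ and $\tildever(\mathcal L(\secparam))$ are identically distributed. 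By \cref{def:minEntropyFire}, the latter has non-$\bot$ min-entropy at least $c'\log\secpar$ for every constant $c'$, which is exactly high min-entropy of the certification scheme.

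The point to handle carefully is that $\mathcal P$ is interactive and may choose $s$ adaptively after seeing $t$. The simulation covers this because $\mathcal L$ itself samples $t$. One must also check that the simulation is exact, so that the distributions coincide rather than merely being close. If the verifier were instead required to output its own prepared serial number, the argument would need adjusting, but here it outputs the $s$ sent by the prover, so the reduction goes through.
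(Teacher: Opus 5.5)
Your proposal is correct and follows essentially the same route as the paper. Correctness, clonability, and transferability are read off from the cloning correctness and conversability of $CQFIRE$, and high min-entropy comes from the identical reduction in which $\mathcal L$ simulates the honest certifier (running $\prepare$, interacting with $\mathcal P$, and applying $\recon$), so that $\tildever(\mathcal L(\secparam))$ and $\widetilde{\certver}\langle\mathcal P,V\rangle$ are identically distributed.

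The differences are cosmetic: the paper phrases the min-entropy step as a contradiction for infinitely many $\secpar$ rather than arguing directly, and it does not address whether the prover retains a valid witness after $\decon$, a point you rightly flag as resting on the intended reading of the definitions.
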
 
\begin{proof}
    The correctness notion of $c$-clonability of witnesses follows directly from the cloning correctness of quantum fire. The notions of transferability and public certifiability follow from the conversability of the quantum fire scheme.

    It remains to prove high min-entropy. Assume, towards a contradiction, that \cref{const:TPCME} does not have high min-entropy. Then there exist a constant $c>0$ and a QPT adversary $\mathcal P$ such that, for infinitely many values of $\secpar$,
    \[
        \Hbot\left[\widetilde{\certver}\langle\mathcal P,V\rangle\right]
        < c\log(\secpar).
    \]

    We construct a QPT adversary $\mathcal L$ against the min-entropy of the underlying quantum fire scheme. The adversary $\mathcal L$ simulates the honest certifier in the $\certver$ protocol with $\mathcal P$. In particular, it runs
    $(t,\quant{w}')\gets\prepare(\secparam)$, sends $t$ to $\mathcal P$, receives $(s,d,vk)$ from $\mathcal P$, and computes
    $\quant{w}''\gets\recon(s,d,vk,\quant{w}')$.
    It then outputs $(s,\quant{w}'')$.

    By the definition of \cref{const:TPCME}, the distribution
    \[
        \widetilde{\ver}\bigl(\mathcal L(\secparam)\bigr)
    \]
    is identical to the distribution
    \[
        \widetilde{\certver}\langle\mathcal P,V\rangle.
    \]
    Therefore, for infinitely many values of $\secpar$,
    \[
        \Hbot\left[\widetilde{\ver}\bigl(\mathcal L(\secparam)\bigr)\right]
        < c\log(\secpar),
    \]
    contradicting the high min-entropy of the underlying quantum fire scheme as defined in \cref{def:minEntropyFire}.
\end{proof}

We can now prove Main Result 3, presented in \cref{thm:informalResultPubliclyCertifiableMinEntropy} using \cref{thm:ConversableShortVariant,clm:Min-entropyFire,thm:min-entropy-linear-fire,thm:ConversableFireToPublicallyCertifiableSecurity,thm:exponential-oss-imply-linear-fire}:

\begin{corollary}
    A conversable quantum fire constructed from an unforgeable one-shot
    signature scheme implies a publicly certifiable min-entropy scheme with
    at least super-logarithmic min-entropy that is $c\log(\secpar)$-transferable,
    for any fixed constant $c$. If the one-shot signature scheme is
    $\frac{1}{3}$-exponentially unforgeable, then the resulting scheme has at least linear
    min-entropy and is $\secpar/8$-transferable.
\label{coro:OssToTransferablePublicCertifiable}
\end{corollary}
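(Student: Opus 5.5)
This corollary is obtained by chaining the earlier results; I would split the argument into the logarithmic and the linear case.

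\emph{Logarithmic case.} Fix a constant $c$ and instantiate \cref{const:fireFromOneShot}(a) with the given unforgeable OSS. By \cref{thm:ConversableShortVariant}, the result is a keyless untelegraphable, conversable, $c\log(\secpar)$-clonable fire scheme. By \cref{clm:Min-entropyFire}, keyless untelegraphability implies high min-entropy in the sense of \cref{def:minEntropyFire}, i.e.\ at least $c'\log(\secpar)$ for every constant $c'$, which is super-logarithmic. Feeding this scheme into \cref{const:TPCME} and applying \cref{thm:ConversableFireToPublicallyCertifiableSecurity} gives a publicly certifiable min-entropy scheme with high min-entropy that is $c\log(\secpar)$-transferable. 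Transferability is inherited from conversability: after each run of $\certver$, the certifier holds a flame that passes $\ver$ and whose chain is one link longer. The honest chain therefore stays within the length bound checked in line~\ref{it:check_length} for $c\log(\secpar)$ sequential transfers.

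\emph{Linear case.} Assume the OSS is exponentially $\frac13$-unforgeable. Since $\frac13>\frac18$, \cref{thm:exponential-oss-imply-linear-fire} applies and gives a keyless untelegraphable, conversable, $\frac{\secpar}{8}$-clonable scheme. \cref{thm:min-entropy-linear-fire} then gives min-entropy at least $\frac{\secpar}{8}$. The main obstacle is that \cref{thm:ConversableFireToPublicallyCertifiableSecurity} is stated only for high min-entropy, so it does not directly transfer a linear bound. I would rerun its proof with the bound $e(\secpar)$ in place of $c\log(\secpar)$. The adversary $\mathcal L$ that simulates the honest certifier against $\mathcal P$ produces exactly the same distribution as $\widetilde{\certver}\langle\mathcal P,V\rangle$. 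Hence any min-entropy bound $e(\secpar)$ on the fire scheme transfers verbatim, and in particular the $\frac{\secpar}{8}$ bound transfers.

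The $\frac{\secpar}{8}$-transferability again follows from conversability and the depth bound $c(\secpar)=\secpar/8$. So the only non-trivial step is checking that the reduction in \cref{thm:ConversableFireToPublicallyCertifiableSecurity} is parameter-agnostic, which holds because the two distributions are identical.
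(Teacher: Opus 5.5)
Your proposal is correct and follows the paper's route. The corollary comes from chaining \cref{thm:ConversableShortVariant,clm:Min-entropyFire,thm:ConversableFireToPublicallyCertifiableSecurity} in the logarithmic case, and \cref{thm:exponential-oss-imply-linear-fire,thm:min-entropy-linear-fire} together with \cref{thm:ConversableFireToPublicallyCertifiableSecurity} in the linear case. You also point out that \cref{thm:ConversableFireToPublicallyCertifiableSecurity} is stated only for high min-entropy, and that its reduction carries over any bound $e(\secpar)$ because the two distributions are identical; this makes explicit a step the paper leaves implicit when it claims linear min-entropy.
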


An important question is whether we can eliminate the need for a quantum certifier. We answer this question in the affirmative, but this has some drawbacks. First, we require that our scheme be constructed from a decomposable conversable quantum fire (see \cref{def:DecomposableFire}). Second, a classical certifier does not receive a quantum witness at the end of the certification protocol, and therefore the resulting classically certifiable scheme is not transferable.

\begin{definition}[Classically certifiable min-entropy scheme]
    A classically certifiable min-entropy scheme is defined as in
    \cref{def:publicCertifiableSyntax}, except that the certifier in
    $\certver$ is a classical PPT algorithm. Consequently, the certifier outputs only an acceptance bit and a classical string $s'$ and is not required to output a quantum witness. We denote this new protocol between a QPT prover and a PPT verifier $\certcver$.

    Correctness is defined analogously to
    \cref{def:PublicCertCorrectness}, with $\certcver$ in place of
    $\certver$ and with no quantum output witness for the certifier.

    We also define $\widetilde{\certcver}$ to output $s'$ if the certifier accepts, and $\bot$ otherwise.
\label{def:classicalCertifiableSyntax}
\end{definition}

\begin{definition} [High Min-Entropy Classical Certifiability]
    We say that a classically certifiable min-entropy scheme has $e(\secpar)$ min-entropy if for every QPT adversary $\mathcal{P}$,
    \begin{equation*}
        \Hbot\left[\widetilde\certcver\langle\mathcal{P},V\rangle\right] \geq e(\secpar).
    \end{equation*}

    We say that a classically certifiable min-entropy scheme has high min-entropy if, for every constant $c$, the scheme has $c\cdot\log(\secpar)$ min-entropy. We say it has linear min-entropy if there exists a constant $c$ such that the scheme has $c\cdot\secpar$ min-entropy.
    \label{def:classicalCertifiability}
\end{definition}

\begin{construction}
    Let $CQFIRE$ be a decomposable quantum fire scheme (see \cref{def:DecomposableFire}).
    We construct a classically certifiable min-entropy scheme exactly as in \cref{const:TPCME}, except that $\certver$ is replaced by the following protocol $\certcver$:
    \begin{enumerate}
        \item The certifier runs $r\gets\ifprepare(\secparam)$ and sends $r$ to the prover.
        \item The prover runs $(d,vk)\gets\decon(s,\quant{w},r)$ and sends $s,d$ to the certifier.
        \item The certifier runs $(ch,vk')\gets\ifrecon(r,s,d)$.
        \item The certifier computes $b\gets\ifver(s,ch,vk')$.
        \item Output $b,s$.
    \end{enumerate}
    \label{const:classicalPCME}
\end{construction}

\begin{proposition}
    If $CQFIRE$ is a \emph{correct} decomposable conversable quantum fire scheme, then \cref{const:classicalPCME} is a \emph{correct} classically certifiable min-entropy scheme.
    \label{prop:classicalCertCorrectness}
\end{proposition}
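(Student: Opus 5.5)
The plan is to reduce correctness of \cref{const:classicalPCME} directly to the conversability of $CQFIRE$ via the equality in \cref{def:DecomposableFire}. Fix the honest execution. The prover holds $(s,\quant{w})$ obtained from $\certgenent$ (i.e., $\spark$), possibly followed by at most $c(\secpar)$ clone and conversation steps. We must show that the classical certifier accepts with probability $1$.

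\textbf{Step 1: view the honest prover as an adversary.} Define $\adv(r)$ to be the (possibly inefficient; \cref{def:DecomposableFire} quantifies over unbounded adversaries) algorithm that internally generates $(s,\quant{w})$ exactly as the honest prover would and runs $(d,vk)\gets\decon(s,\quant{w},r)$. It then outputs $(s,d)$. By \cref{def:DecomposableFire}, the probability that $\ifver(s,ch,vk')=\top$, with $r\gets\ifprepare$ and $(ch,vk')\gets\ifrecon(r,s,d)$, equals the probability that $\ver(s,ch,vk',\quant{g}')=\top$ in the corresponding quantum experiment with $\prepare$ and $\recon$.

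\textbf{Step 2: identify the quantum side with conversability.} The latter experiment is exactly one honest run of the $\conv$ protocol of \cref{def:Conversability} on the prover's flame $(s,\quant{w})$. Hence it accepts with probability $1$ by conversability. Then the left-hand side is $1$, so $b=\top$ with probability $1$.

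\textbf{Step 3: handle sequential repetitions.} \cref{def:PublicCertCorrectness} requires correctness after repeated executions in which the prover keeps its witness. For these, we use that the prover's residual state after $\decon$ is again a valid flame; in the construction it is $(ch,\widetilde{pk},\qsk'')$, and its depth still counts against $c(\secpar)$. Each subsequent classical run is then handled by Step 1 applied to the new prover state.

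The main obstacle is a mismatch of syntax and conventions. The definition of decomposability writes $\recon(r,s,d,\quant{g})$ rather than $\recon(s,d,vk,\quant{g})$, and in \cref{const:classicalPCME} the prover sends only $(s,d)$. I would argue that $vk$ equals $r$ in honest runs, so dropping it loses nothing. The other subtlety is ensuring that the prover's retained flame stays within the clone budget. I would cover this by counting each $\decon$ as one conversation step in the definition of conversability.
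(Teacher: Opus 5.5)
Your proposal is correct and matches the paper's proof. You instantiate the decomposability equality of \cref{def:DecomposableFire} with the honest prover as the adversary, identify the quantum side as an honest run of the $\conv$ protocol, and conclude acceptance with probability $1$ by conversability; your Step 3 on sequential repetitions goes beyond the paper, which treats only a single honest execution, but note that the prover's retained flame there should be $(ch',\widetilde{pk},\qsk'')$ with the extended chain $ch'$, not $ch$.
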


\begin{proof}
    By decomposability of $CQFIRE$, $\ifprepare$, $\ifrecon$ and $\ifver$ are PPT classical algorithms, so the certifier in \cref{const:classicalPCME} is classical.

    It remains to prove correctness. Consider an honest execution of $\certcver$. By the defining condition of decomposable fire (see \cref{def:DecomposableFire}), the acceptance probability of the classical verification procedure, obtained by running $\ifprepare$, $\ifrecon$, and $\ifver$, is equal to the acceptance probability of the corresponding quantum procedure obtained by running $\prepare$, $\recon$, and $\ver$.

    The latter experiment is exactly the honest $\conv$ protocol used in \cref{const:TPCME} to implement $\certver$, and therefore accepts with probability $1$ by conversability (\cref{def:Conversability}). Hence $\certcver$ is correct.
\end{proof}
    
\begin{theorem}
    If $CQFIRE$ is a decomposable quantum fire scheme with $e(\secpar)$ min-entropy, then the classically certifiable min-entropy scheme given in \cref{const:classicalPCME} has $e(\secpar)$ min-entropy (see \cref{def:classicalCertifiability}). 
    \label{thm:CFireMinEntropyImpliesClassicallyCertifiablity}
\end{theorem}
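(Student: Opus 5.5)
The plan is to reduce directly to \cref{lemma:decomposableFirePreservesMinEntropy}, in the same way \cref{thm:ConversableFireToPublicallyCertifiableSecurity} reduced to the min-entropy of the underlying fire.

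Fix an arbitrary QPT prover $\mathcal P$ for $\certcver$. From it I will define a QPT adversary $\mathcal{CL}(r)$ in the form required by \cref{lemma:decomposableFirePreservesMinEntropy}. On input $r$, $\mathcal{CL}$ forwards $r$ to $\mathcal P$ as the certifier's first message and receives $\mathcal P$'s reply $(s,d)$. It then outputs $(s,d)$. Any extra string $\mathcal P$ sends (such as $vk$) is simply discarded, since the classical certifier in \cref{const:classicalPCME} ignores it.

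Next I compare two experiments. The first is an honest execution of $\certcver\langle\mathcal P,V\rangle$: sample $r\gets\ifprepare(\secparam)$, obtain $(s,d)$ from $\mathcal P$, compute $(ch,vk')\gets\ifrecon(r,s,d)$, and output $s$ if $\ifver(s,ch,vk')$ accepts and $\bot$ otherwise. The second is the experiment inside \eqref{eq:decomposableDefinition} with adversary $\mathcal{CL}$. These are exactly the same sampling process, step by step. Hence the random variable $\widetilde{\certcver}\langle\mathcal P,V\rangle$ has the same distribution as $\widetilde\ifver(s,ch,vk)$ in the lemma's experiment. One case needs separate handling: if $\ifrecon$ outputs $\bot$, both experiments reject and output $\bot$. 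Since $\Hbot$ ignores the $\bot$ mass, this case does not affect the min-entropy.

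Applying \cref{lemma:decomposableFirePreservesMinEntropy} then gives $\Hbot[\widetilde\certcver\langle\mathcal P,V\rangle]\geq e(\secpar)$. Because $\mathcal P$ was arbitrary, the scheme has $e(\secpar)$ min-entropy in the sense of \cref{def:classicalCertifiability}.

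I do not expect a substantive obstacle. The one point to check carefully is syntactic. The prover in \cref{const:classicalPCME} may behave adaptively and output an arbitrary pair $(s,d)$ that depends on $r$. That is exactly the class of adversaries quantified over in \cref{lemma:decomposableFirePreservesMinEntropy}, because there the adversary receives $r$. Also, \cref{lemma:decomposableFirePreservesMinEntropy} relies on \cref{def:DecomposableFire} holding for all adversaries, including unbounded ones, so its hypothesis imposes no efficiency requirement beyond what $\mathcal{CL}$ already satisfies.
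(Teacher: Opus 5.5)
Your proposal is correct and follows the paper's proof essentially step for step. Like the paper, you define $\mathcal{CL}(r)$ by running $\mathcal P$ on $r$ and outputting $(s,d)$, note that $\widetilde{\certcver}\langle\mathcal P,V\rangle$ has exactly the distribution of the experiment in \cref{lemma:decomposableFirePreservesMinEntropy}, and conclude by that lemma; your remarks about the $\bot$ output of $\ifrecon$ and about adaptivity in $r$ are harmless clarifications that the paper leaves implicit.
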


\begin{proof}
    Fix any QPT prover $\mathcal P$. Define the QPT adversary $\mathcal{CL}$ from \cref{lemma:decomposableFirePreservesMinEntropy} by letting $\mathcal{CL}(r)$ run $\mathcal P$ on input $r$ and output the resulting pair $(s,d)$.

    By the definition of $\certcver$, the distribution of
    $\widetilde{\certcver}\langle\mathcal P,V\rangle$
    is exactly the distribution obtained by running
    $r\gets\ifprepare(\secparam)$,
    $(s,d)\gets\mathcal{CL}(r)$,
    $(ch,vk)\gets\ifrecon(r,s,d)$,
    and outputting $s$ if $\ifver(s,ch,vk)$ accepts, and $\bot$ otherwise.

    Therefore, by \cref{lemma:decomposableFirePreservesMinEntropy},
    \[
        \Hbot\left[\widetilde{\certcver}\langle\mathcal P,V\rangle\right]
        \geq e(\secpar).
    \]
\end{proof}

By combining \cref{thm:ConversableShortVariant,clm:Min-entropyFire,thm:ossToDecomposable,prop:classicalCertCorrectness,thm:CFireMinEntropyImpliesClassicallyCertifiablity} we can now state:

\begin{corollary}
    An unforgeable one-shot signature scheme with classically generable verification keys implies a classically certifiable min-entropy scheme with high min-entropy.
    \label{coro:OssToClassicallyCertifiable}
\end{corollary}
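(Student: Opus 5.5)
The plan is to chain together the results already established, tracking the three components the statement asks for: a classical certifier, correctness, and high min-entropy.

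\emph{Step 1: decomposable conversable fire.} Start from an unforgeable one-shot signature scheme $OSS$ with a classical algorithm $\oneshotcgen$ such that $\oneshotcgen(\secparam)$ has the same distribution as the verification-key part of $\oneshotgen(\secparam)$. By \cref{thm:ConversableShortVariant}, for any fixed constant $c$, \cref{const:fireFromOneShot}(a) instantiated with $OSS$ is a keyless untelegraphable, conversable, $c\log(\secpar)$-clonable quantum fire scheme. Its correctness and conversability follow from OSS completeness, as noted in that proof. Because $\oneshotcgen$ exists, \cref{thm:ossToDecomposable} shows the same scheme is decomposable, with $\ifprepare=\oneshotcgen$ and $\ifrecon$, $\ifver$ as defined there.

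\emph{Step 2: min-entropy of the fire.} By \cref{clm:Min-entropyFire}, keyless untelegraphability implies high min-entropy in the sense of \cref{def:minEntropyFire}. So for every constant $c'$, the scheme has $c'\log(\secpar)$ min-entropy.

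\emph{Step 3: classical certification.} Apply \cref{const:classicalPCME} to this decomposable fire. \cref{prop:classicalCertCorrectness} gives that the certifier is classical PPT and that the scheme is correct; correctness uses the conversability obtained in Step 1. Then, for each constant $c'$, apply \cref{thm:CFireMinEntropyImpliesClassicallyCertifiablity} with $e(\secpar)=c'\log(\secpar)$. Since $c'$ is arbitrary, the scheme has high min-entropy in the sense of \cref{def:classicalCertifiability}.

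The only delicate point is checking that the hypotheses match exactly. \cref{thm:ossToDecomposable} requires that $\oneshotcgen$ reproduce the distribution of the verification key of $\oneshotgen$, which is how ``classically generable verification keys'' should be read. Also, \cref{lemma:decomposableFirePreservesMinEntropy}, used inside \cref{thm:CFireMinEntropyImpliesClassicallyCertifiablity}, quantifies over all QPT adversaries, so it must be invoked separately for each fixed $c'$. Beyond this bookkeeping, no new argument is needed. The corollary covers only the logarithmic variant, so no exponential-unforgeability assumption enters.
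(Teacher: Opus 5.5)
Your proposal is correct and follows the paper's own argument: the paper obtains this corollary by combining \cref{thm:ConversableShortVariant,clm:Min-entropyFire,thm:ossToDecomposable,prop:classicalCertCorrectness,thm:CFireMinEntropyImpliesClassicallyCertifiablity}, which are exactly the results you chain together, in the same roles. Your bookkeeping is consistent with how the paper uses these results: you read \emph{classically generable verification keys} as equality in distribution, and you apply \cref{thm:CFireMinEntropyImpliesClassicallyCertifiablity} separately for each constant $c'$.
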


\paragraph{AI Disclosure.} We used Claude (Anthropic) and ChatGPT (OpenAI) only to assist with proofreading, routine editorial work, guidance on the literature review, and the generation of illustrations. These tools did not generate or materially affect any substantive content: all definitions, theorems, constructions, and proofs are the authors' own, every suggested edit was reviewed by the authors, and the authors take full responsibility for the entire content.

\ifnum\anonymous=0
    \ifnum\lipics=0
        \ifnum\masterthesis=0
            \subsection*{Acknowledgments}
    We wish to thank Zvika Brakerski for fruitful discussions and for suggesting the main ideas used in \cref{prop:bad_oss}, Taiga Hiroka for notifying us of an inaccuracy in \cref{def:PublicCertCorrectness} in an earlier version of this work, Mark Zhandry for several suggestions and corrections, and Yoav Elem for the assistance with the visuals. 
    This research was supported by the Israel Science Foundation (grant No. 2527/24).

            \BeforeBeginEnvironment{wrapfigure}{\setlength{\intextsep}{0pt}}
            \begin{wrapfigure}{r}{100px}
                \includegraphics[width=100px]{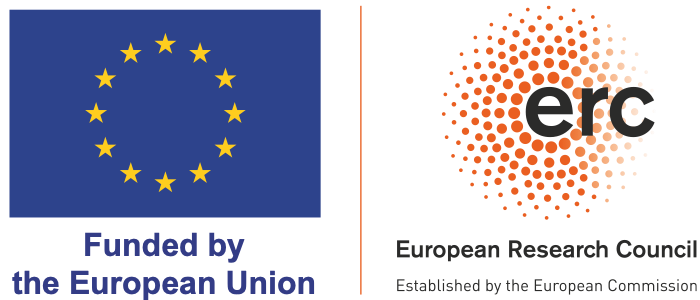}
            \end{wrapfigure}
            This work was funded by the European Union (ERC-2022-COG, ACQUA, 101087742). Views and opinions expressed are, however, those of the author(s) only and do not necessarily reflect those of the European Union or the European Research Council Executive Agency. Neither the European Union nor the granting authority can be held responsible for them.
        \fi
    \fi
\fi
\ifnum\sigconf=1
    \bibliographystyle{ACM-Reference-Format}
\else
    \ifnum\cryptology=1
        \bibliographystyle{abbrv}
    \else
        \ifnum\lipics=1
            \bibliographystyle{plainurl}
        \else
            \ifnum\llncs=1
                \bibliographystyle{splncs04}
            \else 
                \bibliographystyle{alphaabbrurldoieprint}
            \fi
        \fi
    \fi
\fi
\ifnum\masterthesis=0
    \ifnum\smallbib=1
        {\footnotesize \bibliography{main} }
    \else 
        \bibliography{main}
    \fi
\fi

\appendix
\ifnum\shownomenclature=1
    \printnomenclature[1in]
\fi

\section{A Visual Representation of One-Shot Signatures, Conversable Quantum Fire, and the Conversable Quantum Fire Construction from One-Shot Signatures}
\label{app:visual_construction}

The purpose of this appendix is to give an informal graphical presentation of \cref{const:fireFromOneShot} and the primitives that are involved. 

\subsection{Legend}
\label{sec:legend}

\Cref{fig:visual-legend} defines the graphical language used in the illustrations throughout this appendix.

\begin{figure}[htbp]
    \centering
    \input{figs/images/legend/legend-main}
    \caption{\textbf{Graphical notation used in this appendix.}
    Colored squares represent classical strings, such as public keys or serial numbers; different colors denote different classical strings.}
    \label{fig:visual-legend}
\end{figure}
\subsection{Algorithms and Security Games}

\Cref{fig:oss-syntax,fig:fire-syntax} show the syntax of both one-shot signatures (see definition: \cref{def:oneShotSignaturesSyntax}) and quantum fire (see definition: \cref{def:BNZfireSyntax}), displayed using the graphical language we have defined. We omit the security parameter $\secparam$ in these figures for clarity.

\begin{figure}[htbp]
    \centering
    \input{figs/images/oss-syntax/oss-syntax-main}
    \caption{\textbf{One-shot-signature syntax.}
    Note that an honestly generated one-shot signature and public-key pair share the \emph{same color}.}
\label{fig:oss-syntax}
\end{figure}

\begin{figure}[htbp]
    \centering
    \input{figs/images/fire-syntax/fire-syntax-main}
    \caption{\textbf{Quantum-fire syntax.}
    Similar to one-shot signatures, an honestly generated flame and its corresponding serial number share the same color.}
\label{fig:fire-syntax}
\end{figure}

Next, \cref{fig:oss-unforgeability-visual,fig:keyless-untelegraphability-visual} give a graphical representation of the security definitions of one-shot signatures (unforgeability \cref{def:OneShotUnforgeability}) and quantum fire (keyless untelegraphability \cref{def:StrongTelegraphingSecurity}).

\begin{figure}[htbp]
    \centering
    \input{figs/images/oss-security-game/oss-security-game-main}
    \caption{\textbf{One-shot unforgeability.}
    Even an adversary that chooses the verification key should not be able to produce two different signed messages that are accepted under the same verification key.}
\label{fig:oss-unforgeability-visual}
\end{figure}

\begin{figure}[htbp]
    \centering
    \input{figs/images/fire-security-game/fire-security-game-main}
    \caption{\textbf{Keyless untelegraphability.}
    Keyless untelegraphability requires that no efficient pair of adversaries can make the referee accept with non-negligible probability.}
\label{fig:keyless-untelegraphability-visual}
\end{figure}

\subsection{From one-shot signatures to conversable quantum fire}
\label{sec:oss_to_fire_visual}

We now show how we construct each one of conversable quantum fire's algorithms, and the $\conv$ protocol, from unforgeable one-shot signatures. A description of the construction is available in \cref{sec:tech_overview}, and the formal construction can be found in \cref{sec:ConversableFireFromOneShot}. \Cref{fig:spark-from-oss,fig:clone-visual,fig:verify-visual,fig:converse-visual} show $\spark$, $\clone$, $\ver$ and $\conv$, respectively.

\begin{figure}[htbp]
    \centering
    \input{figs/images/construction/spark/spark-main}
    \caption{\textbf{Sparking a new flame.} The blue public key of the one-shot signature takes the role of the serial number of the blue flame.}
    \label{fig:spark-from-oss}
\end{figure}

\begin{figure}[htbp]
    \centering
    \input{figs/images/construction/clone/clone-main}
    \caption{\textbf{Cloning a flame.} The black ellipses enclose each of the new blue flames. Note that although the new one-shot signatures are red and yellow, the two flames remain blue, corresponding to the original serial number.}
    \label{fig:clone-visual}
\end{figure}

\begin{figure}[htbp]
    \overfullrule=0pt
    \centering
    \input{figs/images/construction/verify/verify-main}
    \caption{\textbf{Verifying a flame.} Note that each honestly cloned flame is a branch in the tree, and thus the verification algorithm verifies the validity of the entire branch. We note that our graphical representation omits an important part of the verification process: checking that the length of the branch (its depth in the tree) does not exceed the allowed bound (logarithmic for \cref{const:fireFromOneShot}(a) and linear for \cref{const:fireFromOneShot}(b)).}
    \label{fig:verify-visual}
\end{figure}

\begin{figure}[htbp]
    \overfullrule=0pt
    \centering
    \input{figs/images/construction/converse/converse-main}
    \caption{\textbf{Conversing a flame.} Note that at the end of the protocol the binary signatures tree has a similar form to the cloning tree, even though the leaves are held by different parties.}
    \label{fig:converse-visual}
\end{figure}

\end{document}